\documentclass[webpdf,contemporary,large]{oup-authoring-template}%

\graphicspath{{Fig/}}

\usepackage{booktabs}
\theoremstyle{thmstyleone}%
\newtheorem{theorem}{Theorem}
\newtheorem{proposition}[theorem]{Proposition}%
\theoremstyle{thmstyletwo}%
\theoremstyle{thmstylethree}%

\begin{document}

\journaltitle{Journal Title Here}
\DOI{DOI added during production}
\copyrightyear{2026}
\pubyear{2026}
\vol{XX}
\issue{x}
\access{Published: Date added during production}
\appnotes{Paper}

\firstpage{1}


\title[PI-Mamba]{PI-Mamba: Linear-Time Protein Backbone Generation via Spectrally Initialized Flow Matching}

\author[1,$\ast$]{Tianyu Wu}
\author[2]{Lin Zhu}

\address[1]{\orgdiv{Center for Biophysics and Quantitative Biology}, \orgname{University of Illinois Urbana-Champaign}, \orgaddress{\street{600 S Mathews Ave.}, \postcode{61801}, \state{IL}, \country{USA}}}
\address[2]{\orgdiv{School of Information Science}, \orgname{University of Illinois Urbana-Champaign}, \orgaddress{\street{501 E Daniel St.}, \postcode{61820}, \state{IL}, \country{USA}}}

\corresp[$\ast$]{Tianyu Wu. \href{email:tianyu16@illinois.edu}{tianyu16@illinois.edu}}

\received{Date}{0}{Year}
\revised{Date}{0}{Year}
\accepted{Date}{0}{Year}



\abstract{\textbf{Motivation:} Generative models for protein backbone design have to simultaneously ensure geometric validity, sampling efficiency, and scalability to long sequences. However, most existing approaches rely on iterative refinement, quadratic attention mechanisms, or post-hoc geometry correction, leading to a persistent trade-off between computational efficiency and structural fidelity.\\
\textbf{Results:} We present Physics-Informed Mamba (PI-Mamba), a generative model that enforces exact local covalent geometry by construction while enabling linear-time inference. PI-Mamba integrates a differentiable constraint-enforcement operator into a flow-matching framework and couples it with a Mamba-based state-space architecture. To improve optimisation stability and backbone realism, we introduce a spectral initialisation derived from the Rouse polymer model and an auxiliary cis-proline awareness head. Across benchmark tasks, PI-Mamba achieves 0.0\% local geometry violations and high designability (scTM = 0.91 ± 0.03, n = 100), while scaling to proteins exceeding 2,000 residues on a single A5000 GPU (24 GB).\\}
\keywords{Protein backbone generation, Flow matching, SE(3), State space models, Mamba, Geometric constraint enforcement}
\keywords[Abbreviations]{PI-Mamba, SSM, FM, FAPE, scTM, NeRF, SE(3)}
\keywords[Availability:]{Cath4.2 is available at \url{https://www.cathdb.info/wiki/doku/?id=data:index}. Distilled Dataset and Code are available in xxx}




\maketitle


\section{Introduction}
Protein structure generation is fundamental to protein design and synthetic biology. Recent generative models, diffusion-based methods \cite{Watson2023,yimFastProteinBackbone2023} and large auto-regressive architectures \cite{ingrahamGenerativeModelsGraphbased2019a}, have achieved impressive fidelity on short to medium-length proteins. However, scaling generation to long chains remains challenging due to a fundamental tension between global dependency modeling, computational efficiency, and strict geometric validity. Most existing methods enforce physical constraints via soft penalties or post-hoc refinement, which degrade as sequence length increases, leading to accumulated bond violations and reliance on expensive downstream relaxation \cite{Jumper2021}.

Structured state space models (SSMs), particularly selective variants such as Mamba \cite{guMambaLineartimeSequence2023}, offer linear-time inference and strong long-context capabilities, making them attractive alternatives to attention-based models that scale quadratically \cite{vaswaniAttentionAllYou2023}. However, standard SSMs are agnostic to physical constraints and lack inductive biases for polymeric systems, making them ill-suited for protein backbone generation, where every sample must lie on a highly constrained kinematic manifold.
We introduce \textbf{PI-Mamba}, a generative model for protein backbone design that can (i) guarantee zero bond-length and bond-angle violations without post-hoc relaxation, (ii) scale to proteins exceeding 2{,}000 residues on a single NVIDIA RTX A5000 GPU (24\,GB) with linear-time and linear-memory inference, and (iii) achieve high designability (self-consistency TM-score $0.91 \pm 0.03$, $n=100$) \cite{zhangScoringFunctionAutomated2004}. Unlike diffusion-based methods that require quadratic attention and iterative geometry correction, PI-Mamba produces geometrically valid backbones by construction at every generation step, enabling high-throughput exploration of protein design spaces impractical for existing approaches.

\section{Related Work}

\textbf{Generative Protein Design} Protein backbone generation is increasingly driven by diffusion and flow-matching models. Representive models include RFdiffusion \cite{Watson2023} fine-tuned on RoseTTAFold \cite{Baek2021} and FrameDiff \cite{Yim2023} diffused on $SE(3)$ frames; while flow matching \cite{lipmanFlowMatchingGenerative2022, albergoBuildingNormalizingFlows2022} offers alternatives to score-based diffusion, with FrameFlow \cite{yimFastProteinBackbone2023} and OriginFlow \cite{yanRobustReliableNovo2025} exploring $SE(3)$ interpolants. Our approach couples manifold-aware generation with linear-time Mamba.


\textbf{State Space Models.} S4 \cite{Gu2022} and Mamba \cite{guMambaLineartimeSequence2023} achieve $O(L)$ complexity. In biology, Mamba has been applied to DNA \cite{Schiff2024}, RNA \cite{wangProteinConformationGeneration2024}, and protein modelling \cite{Sgarbossa2024}.  Mamba's selective updates suit proteins where sparse residues interact at each position, enabling exploration of conformational space at $L>1000$. To our knowledge, PI-Mamba is among the first to combine SSM architectures with $SE(3)$ frame-based generative protein design.

\section{Materials and Methods}

\label{sec:methods}

PI-Mamba generates protein backbones with two goals: (i) linear-time sampling in sequence length and
(ii) guaranteed local covalent geometry without post-hoc relaxation. We first describe the overall
pipeline and then introduce notation used throughout the Methods. Technical derivations are provided in Appendix~\ref{sec:math_deriv}; implementation details and hyperparameters in Appendix~\ref{sec:hyperparams}.

\subsection{Overview}
\label{sec:methods_overview}
PI-Mamba (Fig.~\ref{fig:architecture}) is a physics-informed generative framework that adapts Mamba-style state space models to efficient 3D protein backbone synthesis under geometric and physical constraints. The model takes as input a sequence of length $L$ together with an $SE(3)$-aware noise prior, and generates structured backbones by integrating a learned time-dependent vector field over a series of small steps. The state is represented as residue-local rigid frames, enabling equivariant perturbations of rigid-body degrees of freedom while avoiding expensive all-atom refinement during sampling.

A central design choice is to enforce covalent-geometry constraints \emph{during} generation rather than as post-processing. Specifically, a differentiable kinematic constraint enforcement operator periodically retracts the intermediate $C_\alpha$ trace to maintain ideal bond lengths and angles at every generation step, and full backbone atoms (N, $C_\alpha$, C, O) are reconstructed at output using a deterministic internal-coordinate procedure (NeRF) with fixed bond lengths, bond angles, and peptide planarity. This makes constraint enforcement part of the generative trajectory rather than a post-hoc correction.

Global consistency is obtained with a bidirectional Mamba backbone that achieves $O(L)$ inference complexity via SSM recurrence rather than quadratic attention. A key contribution is \emph{physics-informed spectral initialization}: the SSM state transition matrices are initialized using the Rouse polymer model \cite{rouseTheoryLinearViscoelastic1953,doiTheoryPolymerDynamics2013}, encoding meaningful long-range correlations from the outset. This provides an interpretable mode hierarchy---low-frequency modes capture global topology while high-frequency modes capture local fluctuations---stabilizing generation and preventing geometric pathologies such as molten-globule collapse. The Rouse initialization also induces structure-dependent memory timescales, where helices relax faster than loops (Appendix~\ref{sec:ipa_details} describes the optional full IPA variant).

\begin{figure}[t]
\begin{center}
\includegraphics[width=0.95\columnwidth]{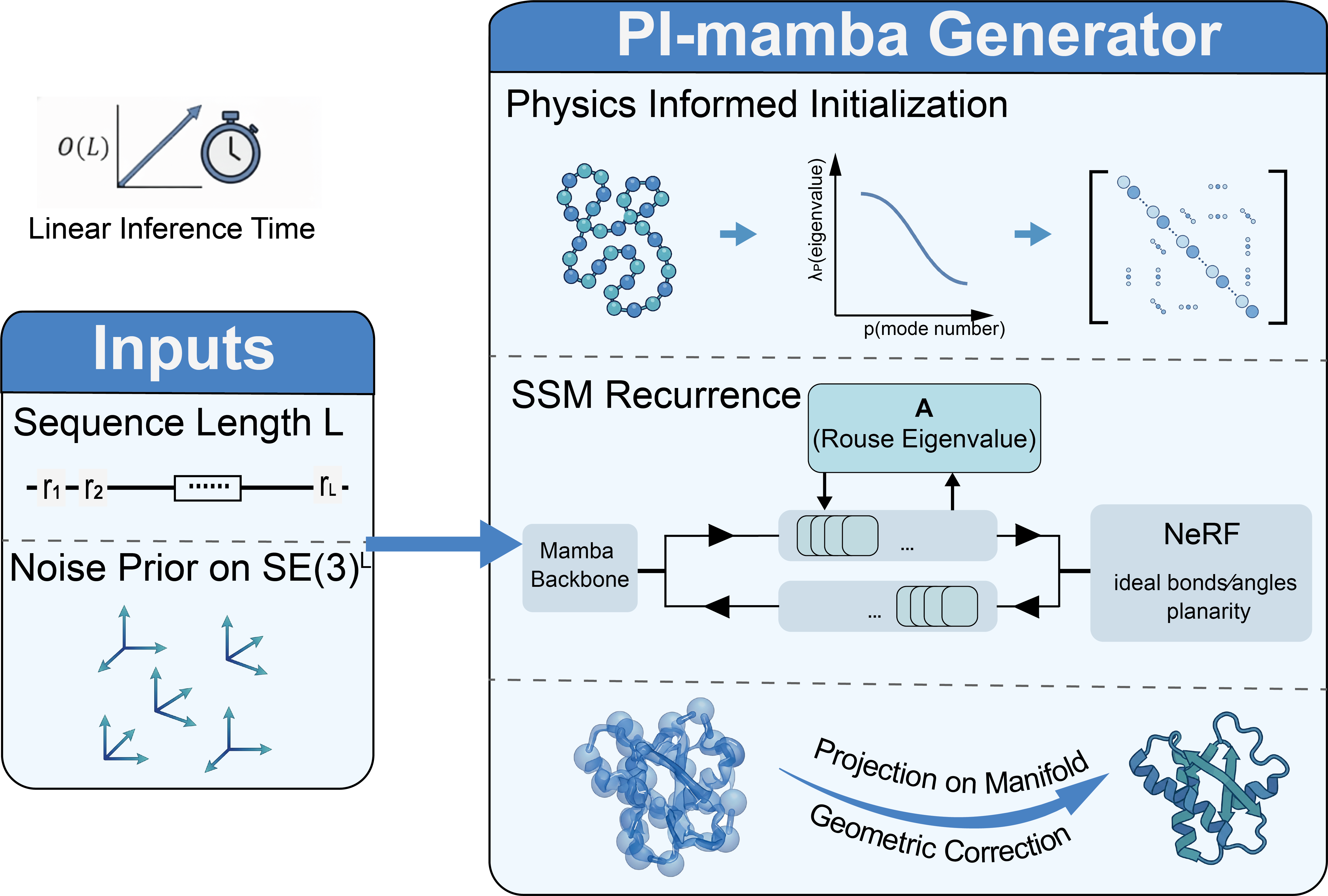}
\caption{\textbf{Architecture of PI-Mamba for scalable protein design.} The framework processes a sequence length L and a noise prior on SE(3)L through a unified generator optimized for O(L) inference. The pipeline consists of three integrated lanes: (1) \textbf{Physics-informed spectral initialization}, which uses the Rouse spectrum $\lambda_p=4sin^2(p\pi /2L)$ to initialize SSM decay rates $A_p=exp(-\lambda_p\Delta t/\tau(x))$, ensuring global topology is captured by low modes and local details by high modes; (2) a \textbf{Bidirectional Mamba backbone} that predicts tangent velocities $\xi_\theta(t,T)\in SE(3)^L$ for flow matching, enabling iterative updates $T_{t+\Delta t}=T_t\cdot exp(\Delta t\xi_\theta)$ and backbone reconstruction via NeRF; and (3) \textbf{Constraint enforcement}, where periodic $C_\alpha$ retraction $\Pi_K$ maintains ideal 3.80\AA bond lengths and peptide planarity.}
\label{fig:architecture}
\end{center}
\end{figure}
\subsection{Notation}
\label{sec:methods_notation}

Let $L$ denote the protein length. We represent a backbone as a sequence of residue-local rigid frames
\begin{equation}
T \;=\; \{T_i\}_{i=1}^L \in SE(3)^L, \qquad T_i=(R_i,\mathbf{t}_i)
\end{equation}
where $\mathbf{t}_i\in\mathbb{R}^3$ is the $C_\alpha$ coordinate and $R_i\in SO(3)$ encodes the orientation of
the local residue frame. We write $\mathbf{X}(T)\in\mathbb{R}^{3L}$ for the $C_\alpha$ trace extracted from $T$.

PI-Mamba defines a time-dependent vector field in the Lie algebra,
\begin{equation}
\xi_\theta(t,T_t)\in \mathfrak{se}(3)^L
\end{equation}
and generates samples by numerically integrating from $t=0$ to $t=1$. We use $S$ to denote the number of
integration steps (to avoid overloading $T$ as both frames and step count), with step size $\Delta t = 1/S$.

Constraint enforcement acts on $C_\alpha$ traces via a deterministic operator $\Pi_K$.
We apply $\Pi_K$ every $k$ integration steps (projection frequency $k$). Final backbone atoms are obtained
by deterministic reconstruction with ideal internal coordinates; we denote the final output as
$\widehat{\mathbf{Y}}(T_1)$. A summary of all notation is provided in Appendix~\ref{sec:notation}. Unless stated otherwise, PI-Mamba inference time includes both in-loop retraction
and final reconstruction.

Designability is evaluated using self-consistency TM-score (scTM) \cite{zhangScoringFunctionAutomated2004}, computed by inverse-folding generated backbones to sequences and re-predicting structures, then averaging TM-score between the re-predicted structure and the original backbone (full protocol in Appendix~\ref{sec:reproducibility}).

\subsection{Geometric flow matching on $SE(3)^L$}
\label{sec:methods_fm}
PI-Mamba models backbone generation as transport on the product manifold $\mathcal{M}=SE(3)^L$.
Given an initial noisy state $T_0\sim p_0$ and a learned time-dependent vector field
$\xi_\theta(t,T_t)\in \mathfrak{se}(3)^L$, samples are generated by integrating
\begin{equation}
\frac{dT_t}{dt} \;=\; T_t \cdot \xi_\theta(t,T_t),
\qquad t\in[0,1]
\label{eq:se3_ode}
\end{equation}
where $\cdot$ denotes the left-trivialized action of the Lie algebra on the group.
We use a first-order Lie-group integrator with step size $\Delta t=1/S$:
\begin{equation}
T_{t+\Delta t} \;=\; T_t \cdot \exp\!\big(\Delta t\,\xi_\theta(t,T_t)\big)
\label{eq:lie_euler}
\end{equation}
which is efficient in the small-step regime and preserves equivariance by construction (error bound in Appendix~\ref{app:proofs}).

Training uses flow matching to regress the vector field toward a prescribed probability path between the prior distribution $p_0$ and the data distribution $p_1$. In practice, this reduces to sampling $t\sim\mathcal{U}[0,1]$, drawing an endpoint example from the data distribution, constructing an intermediate state at time $t$ using the chosen path, and minimizing a mean-squared error between $\xi_\theta(t,T_t)$ and the target velocity (details of the chosen path and tangent-space implementation are provided in Appendix~\ref{sec:math_deriv}). Because these derivations are standard for geometric flow matching,
We focus below on the two components that differentiate PI-Mamba: (i) physics-informed initialisation of the sequence model that predicts $\xi_\theta$, and (ii) explicit constraint enforcement inside the
integration loop.
\subsection{Physics-informed spectral initialisation of the Mamba backbone}
\label{sec:methods_rouse_mamba}

The core sequence model is a bidirectional Mamba state space network that predicts the per-residue twist velocities required by Eq.~\ref{eq:lie_euler}. A key contribution is that we initialise the state space dynamics using polymer physics rather than generic random or learned spectra. We draw inspiration from the Rouse model, in which a polymer chain is represented as beads connected by harmonic springs. The chain connectivity induces a Laplacian spectrum with eigenvalues
\begin{equation}
\lambda_p \;=\; 4\sin^2\!\left(\frac{p\pi}{2L}\right)
\qquad p=0,\ldots,L-1
\label{eq:rouse_lambda}
\end{equation}
which correspond to relaxation modes ordered from global (low $p$) to local (high $p$).

Motivated by the correspondence between discretised Rouse relaxation and diagonal structured state space models, we initialise the Mamba state transition in mode space as
\begin{equation}
A_p \;=\; \exp\!\left(-\lambda_p\,\frac{\Delta t}{\tau(\mathbf{x})}\right)
\label{eq:rouse_init}
\end{equation}

where $\Delta t$ is the (learned) discretization step and $\tau(\mathbf{x})$ is an input-dependent relaxation time predicted from local features. This imposes an inductive bias in which low-frequency modes persist longer and carry global topology, while high-frequency modes decay quickly and capture
local fluctuations. The learned $\tau(\mathbf{x})$ modulates this hierarchy in a structure-dependent manner, allowing flexible regions to maintain longer-range correlations than rigid secondary-structure
segments.

To improve long-range consistency without quadratic attention, we process sequences bidirectionally and average forward and reverse scans. This yields $O(L)$ time and memory scaling in the sequence backbone. We quantify how strongly the internal representations align with the Rouse basis using the Rouse mode concentration metric $\mathcal{C}_k$ (definition and computation in Supplemental File), and we report its correlation with downstream designability in Results.

\subsection{Constraint enforcement during generation}
\label{sec:methods_constraints}

Physically valid backbones lie on a constrained subset of $\mathbb{R}^{3L}$ that enforces adjacent
bond-length scale and local covalent geometry. Rather than relying on soft penalties or post-hoc
relaxation, PI-Mamba enforces constraints explicitly during sampling.

Let $\mathbf{X}(T)\in\mathbb{R}^{3L}$ denote the $C_\alpha$ trace extracted from frames $T$.
We define a deterministic retraction operator $\Pi_K$ that updates the trace sequentially to enforce
the target adjacent distances:
\begin{equation}
\mathbf{x}_{i+1} \leftarrow \mathbf{x}_i + d_i
\frac{\mathbf{x}_{i+1}-\mathbf{x}_i}{\|\mathbf{x}_{i+1}-\mathbf{x}_i\|},
\qquad i=1,\ldots,L-1
\label{eq:ca_retract}
\end{equation}
where $d_i=3.80$~\AA{} for trans peptide bonds and $d_i=2.96$~\AA{} for cis-proline transitions.
This step prevents integration drift in the adjacent distance scale and stabilizes the generative
trajectory. We apply $\Pi_K$ every $k$ integration steps (projection frequency $k$).

At output, we reconstruct full backbone atoms (N, $C_\alpha$, C, O) using a deterministic internal
coordinate procedure with fixed ideal bond lengths, bond angles, and peptide planarity. This yields
zero bond-length and peptide-planarity violations by construction. Importantly, this reconstruction is
part of PI-Mamba’s reported inference cost, rather than an external refinement step. The scope of what
is \emph{guaranteed} by construction versus what is \emph{encouraged} by learning is summarized in SI,
together with formal statements of the guarantees.
\subsection{Training objective and auxiliary geometric supervision}
\label{sec:methods_objective}

PI-Mamba is trained to match the target flow while producing structures that are close to the constraint manifold even before projection. The primary training signal is the flow matching regression loss on $\xi_\theta(t,T_t)$ (SI). To improve geometric fidelity of intermediate states and reduce the
gap between raw and constrained outputs, we add auxiliary structure losses evaluated on coordinates obtained from the current frames (and after retraction when applicable).

We use Frame Aligned Point Error (FAPE) to supervise local structure in residue-centered frames:
\begin{equation}
\mathcal{L}_{\mathrm{FAPE}} =
\frac{1}{L} \sum_{i,j} \sum_{a}
\min\!\left(d_{\mathrm{clamp}},
\left\|T_i^{-1}T_j\mathbf{x}_a-(T_i^{gt})^{-1}T_j^{gt}\mathbf{x}_a\right\|\right)
\label{eq:fape}
\end{equation}
where $d_{\mathrm{clamp}}=10$~\AA \space and $a \in \{N,C_\alpha,C,O\}$. In addition, we include lightweight geometric penalties that encourage realistic local backbone statistics (e.g., bond and dihedral preferences); explicit definitions and weights are provided in the SI. The overall training objective is
\begin{equation}
\begin{split}
\mathcal{L}_{\mathrm{aux}} \;&=\; \mathcal{L}_{\mathrm{FAPE}} + \mathcal{L}_{\mathrm{bond}}
+ \mathcal{L}_{\mathrm{Rama}} + \mathcal{L}_{\mathrm{HB}} \\
\mathcal{L}_{\mathrm{total}} \;&=\; \mathcal{L}_{\mathrm{FM}} \;+\; \lambda_{\mathrm{aux}}\mathcal{L}_{\mathrm{aux}}
\end{split}
\label{eq:total_loss}
\end{equation}
Here, $\mathcal{L}_{\mathrm{FM}}$ is the flow matching regression loss that trains the network to predict the target velocity field on $SE(3)^L$ (Appendix~\ref{sec:math_deriv}). The auxiliary weight $\lambda_{\mathrm{aux}}=1.0$ balances the primary flow objective against the geometric terms. $\mathcal{L}_{\mathrm{FAPE}}$ (Eq.~\ref{eq:fape}) supervises pairwise distances between backbone atoms in residue-local frames, penalizing deviations from the ground-truth structure up to a clamped radius. $\mathcal{L}_{\mathrm{bond}}$ penalizes deviations of covalent bond lengths (N--$C_\alpha$, $C_\alpha$--C, C--O, and C--N$_{\mathrm{next}}$) from their ideal values, encouraging local stereochemical accuracy in the predicted coordinates. $\mathcal{L}_{\mathrm{Rama}}$ discourages backbone dihedral angles $(\phi,\psi)$ that fall outside favored regions of the Ramachandran plot, using a kernel density estimate derived from high-resolution crystal structures. $\mathcal{L}_{\mathrm{HB}}$ promotes formation of backbone hydrogen bonds (N--H$\cdots$O{=}C) by rewarding donor--acceptor distances near $2.9$\,\AA{} with appropriate angular geometry. Together, these terms reduce reliance on projection as a ``repair'' mechanism and empirically lead to small differences between raw and constrained outputs.


\subsection{Training data, distillation, and length curriculum}
\label{sec:methods_data_distill_curriculum}

We train on a hybrid dataset that combines real protein domains with a distilled synthetic corpus. Real structures are drawn from CATH~4.2 filtered at 40\% sequence identity\cite{Sillitoe2019}, with fixed training, validation, and test splits (Appendix~\ref{sec:reproducibility}). To improve coverage of high-designability backbones without incurring
quadratic-cost teacher inference at generation time, we additionally distill from a stronger teacher model by sampling a large set of candidate backbones and filtering for designability using scTM. PI-Mamba is trained on the mixture of real and distilled examples; ablations in Results quantify the contribution of distillation beyond CATH-only training.

Optimization uses AdamW with cosine decay and linear warmup (Appendix~\ref{sec:hyperparams}). Because pairwise structural supervision scales with $O(L^2)$ during training, we employ a progressive length curriculum: we start from shorter sequences and increase the maximum length to $L_{\max}=500$ over the first 20k steps. This
stabilizes early training and improves sample quality at longer lengths. Data augmentation includes random global $SE(3)$ transformations and residue masking (Appendix~\ref{sec:reproducibility}). Hyperparameters are selected based on validation FAPE and scTM on held-out CATH domains.

\subsection{Sampling procedure}
\label{sec:methods_sampling}

At inference time, PI-Mamba generates samples by integrating Eq.~\ref{eq:se3_ode} using the Lie-group update in Eq.~\ref{eq:lie_euler} for $S$ steps. Every $k$ steps we apply the $C_\alpha$ retraction operator in Eq.~\ref{eq:ca_retract} to maintain the correct adjacent distance scale. After reaching $t=1$, we reconstruct full backbone atoms deterministically with ideal internal coordinates. Unless stated otherwise, all reported inference times include retraction and reconstruction. We use $S=100$ by default and optionally increase $S$ for higher fidelity, as reported in Results. Integration details (solver, NFEs, and comparison to baselines) are provided in Appendix~\ref{sec:ode_details}.

\begin{figure}[t]
\begin{center}
\includegraphics[width=0.95\columnwidth]{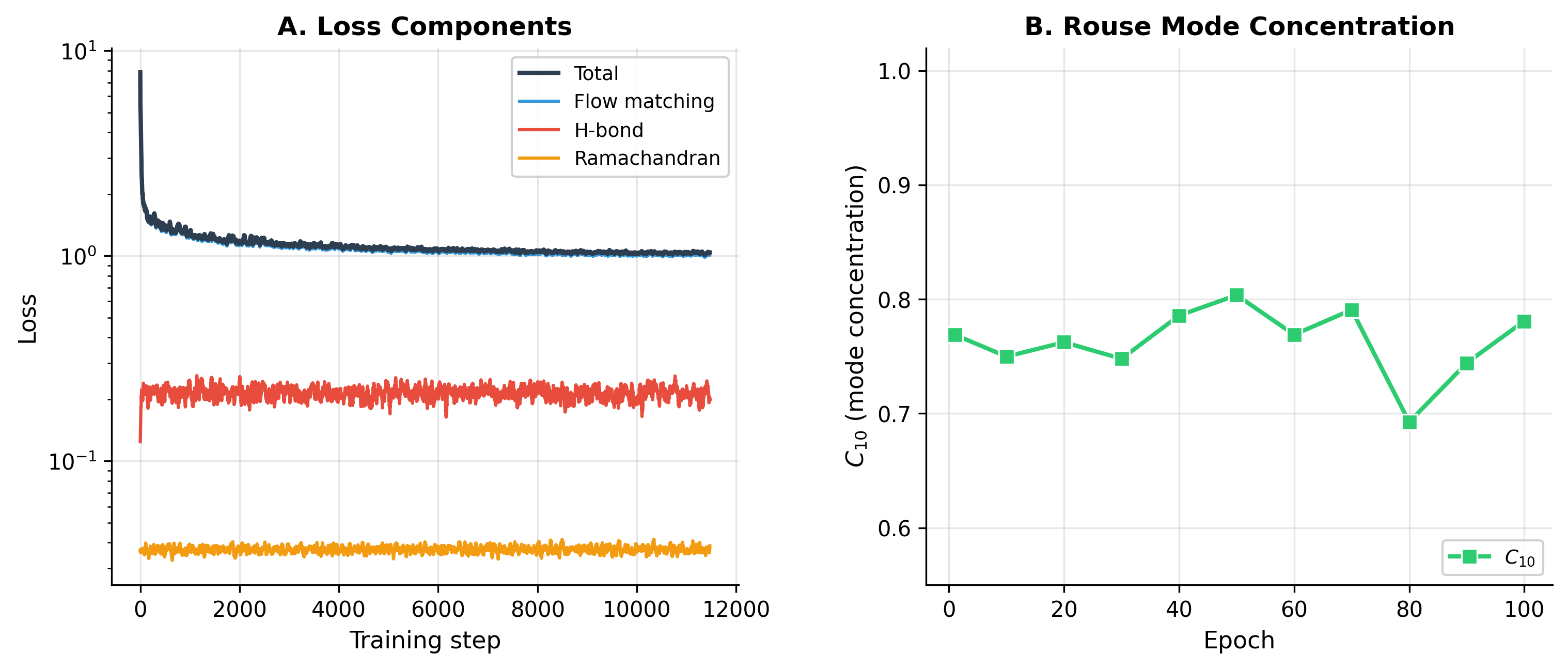}
\caption{Training dynamics of PI-Mamba. (A) Loss components (log scale): total loss, flow matching, hydrogen-bond, and Ramachandran losses all converge stably. (B) Rouse mode concentration $\mathcal{C}_{10}$ evaluated periodically during training, measuring the fraction of hidden-state variance captured by the lowest 10 Rouse modes.}
\label{fig:training_curves}
\end{center}
\end{figure}


\section{Results}
\label{sec:results}

We evaluate PI-Mamba on unconditional protein backbone generation with sequence length $L=100$, drawing $n=100$ samples per method. The representative diffusion- and flow-based baselines run with their official implementations and default settings. To fairly assess geometric fidelity, we report validity both \emph{before} and \emph{after} each method's recommended correction step (projection for PI-Mamba; post-hoc refinement for baselines). We first evaluate overall designability and validity (Section~\ref{sec:results_mainbench}--\ref{sec:results_validity}), then demonstrate scalability to long sequences (Section~\ref{sec:results_scaling}), and finally examine the mechanistic basis for these results (Section~\ref{sec:emergent_physics}). Full protocol details are provided in Appendix~\ref{sec:reproducibility}.

\subsection{Benchmark comparison on CATH backbones}
\label{sec:results_mainbench}

Table~\ref{tab:results} summarizes the primary benchmark against FrameDiff \cite{Yim2023}, FrameFlow \cite{yimFastProteinBackbone2023}, Genie2 \cite{linOutManyOne2024}, Chroma \cite{Ingraham2023}, RFdiffusion \cite{Watson2023}, Proteus \cite{Proteus2024}, and Proteina \cite{Zhang2024}. PI-Mamba obtains scTM $0.91\pm0.03$ at $L=100$ with a rounded final violation rate of $0.0\%$ and generation time of $2.3$\,s/sample, the fastest among all methods evaluated. Critically, PI-Mamba requires no post-hoc refinement (e.g., Rosetta FastRelax or OpenMM minimization) to reach zero violations, whereas all baselines depend on such correction steps, incurring substantial additional runtime. PI-Mamba is also the only method whose raw network output already exhibits near-zero violations ($0.1\%$), indicating that geometric fidelity is learned rather than imposed externally. As a sanity control, applying the projector to pure noise produces valid but non-designable structures (scTM $\approx 0$), confirming that validity alone is insufficient for designability.

\begin{table}[t]
\caption{Benchmark comparison at $L=100$ with $n=100$ samples. \textbf{Raw Viols}: \% bond violations before projection/refinement. \textbf{Final Viols}: \% after projection (PI-Mamba) or refinement (baselines). \textbf{scTM}: self-consistency TM-score (mean$\pm$SD when available). \textbf{Div}: mean pairwise RMSD (definition in Supplemental File). \textbf{Total Time}: end-to-end runtime including projection or refinement.}
\label{tab:results}
\centering
\small
\resizebox{\columnwidth}{!}{
\begin{tabular}{lccccc}
\toprule
Model & Raw Viols $\downarrow$ & Final Viols $\downarrow$ & scTM $\uparrow$ & Div $\uparrow$ & Total Time $\downarrow$ \\
\midrule
FrameDiff & 12.1\% & 0.2\% & 0.58 & 12.8\,\text{\AA} & 72.0s \\
FrameFlow & 3.2\% & 0.0\% & 0.70 & 12.2\,\text{\AA} & 22.2s \\
Genie2 & 2.1\% & 0.0\% & 0.93 & 14.4\,\text{\AA} & 60.0s \\
Chroma & 4.5\% & 0.0\% & 0.60 & 17.1\,\text{\AA} & 73.0s \\
RFdiffusion & 8.2\% & 0.0\% & 0.97 & 8.9\,\text{\AA} & 37.2s \\
Proteus & 5.4\% & 0.0\% & 0.86 & 11.2\,\text{\AA} & 13.5s \\
Proteina & 3.8\% & 0.0\% & 0.92 & 12.5\,\text{\AA} & 20.8s \\
\textit{Control: Noise + Proj.} & -- & 0.0\% & 0.00 & 28.3\,\text{\AA} & $<$1s \\
\midrule
PI-Mamba & \textbf{0.1}\% & 0.0\% & $0.91\pm0.03$ & 10.2\,\text{\AA} & \textbf{9.4s} \\
\bottomrule
\end{tabular}%
}
\raggedright\footnotesize
\textbf{Notes.} ``0.0\%'' denotes a rounded rate $<0.05\%$ under the stated threshold. All generation times measured on a single A5000 (24\,GB) with batch size 1. $^\dagger$Chroma scTM and diversity cited from \citet{boseSE3stochasticFlowMatching2023}; generation time not directly comparable. Baselines that apply post-hoc refinement (e.g., Rosetta FastRelax for RFdiffusion, OpenMM for FrameDiff) incur additional cost not shown. PI-Mamba requires no post-hoc refinement. Additional baseline protocol details are provided in Appendix~\ref{sec:reproducibility}.

\end{table}

\subsection{Geometric validity and stereochemical consistency}
\label{sec:results_validity}

The benchmark above shows that PI-Mamba reaches low final violation rates. We now examine this geometric fidelity in detail by comparing against an unconstrained Euclidean baseline (Table~\ref{tab:validity_breakdown}). PI-Mamba attains rounded final bond and angle violation rates of $0.0\%$ under the evaluation thresholds, and achieves near-perfect agreement for cis/trans proline states. Figure~\ref{fig:diversity} illustrates that PI-Mamba produces a narrow $C_\alpha$--$C_\alpha$ bond-length distribution after projection, whereas unconstrained baselines produce broad distance distributions and require substantial correction. When the same projector is applied post hoc to unconstrained outputs, the resulting structural shift is large (mean RMSD increase $>2.0$\,\AA), consistent with the projector correcting geometric errors that are not learned by those models.

\begin{table}[t]
\caption{Geometric validity breakdown (rounded). Rates computed over $n=100$ generated backbones at $L=100$.}
\label{tab:validity_breakdown}
\centering
\small
\resizebox{\columnwidth}{!}{%
\begin{tabular}{lcc}
\toprule
Metric & Euclidean baseline & PI-Mamba (Final) \\
\midrule
Bond violations ($\lvert d-3.8\rvert>0.5\,\text{\AA}$) & 76.3\% & 0.0\% \\
Angle violations ($>10^\circ$) & 3.2\% & 0.0\% \\
Cis-proline recall ($\omega$ agreement) & 35.0\% & 99.2\% \\
Trans-proline recall ($\omega$ agreement) & 92.1\% & 99.8\% \\
\bottomrule
\end{tabular}}
\end{table}

\begin{figure}[t]
\vskip 0.2in
\begin{center}
\includegraphics[width=0.58\columnwidth]{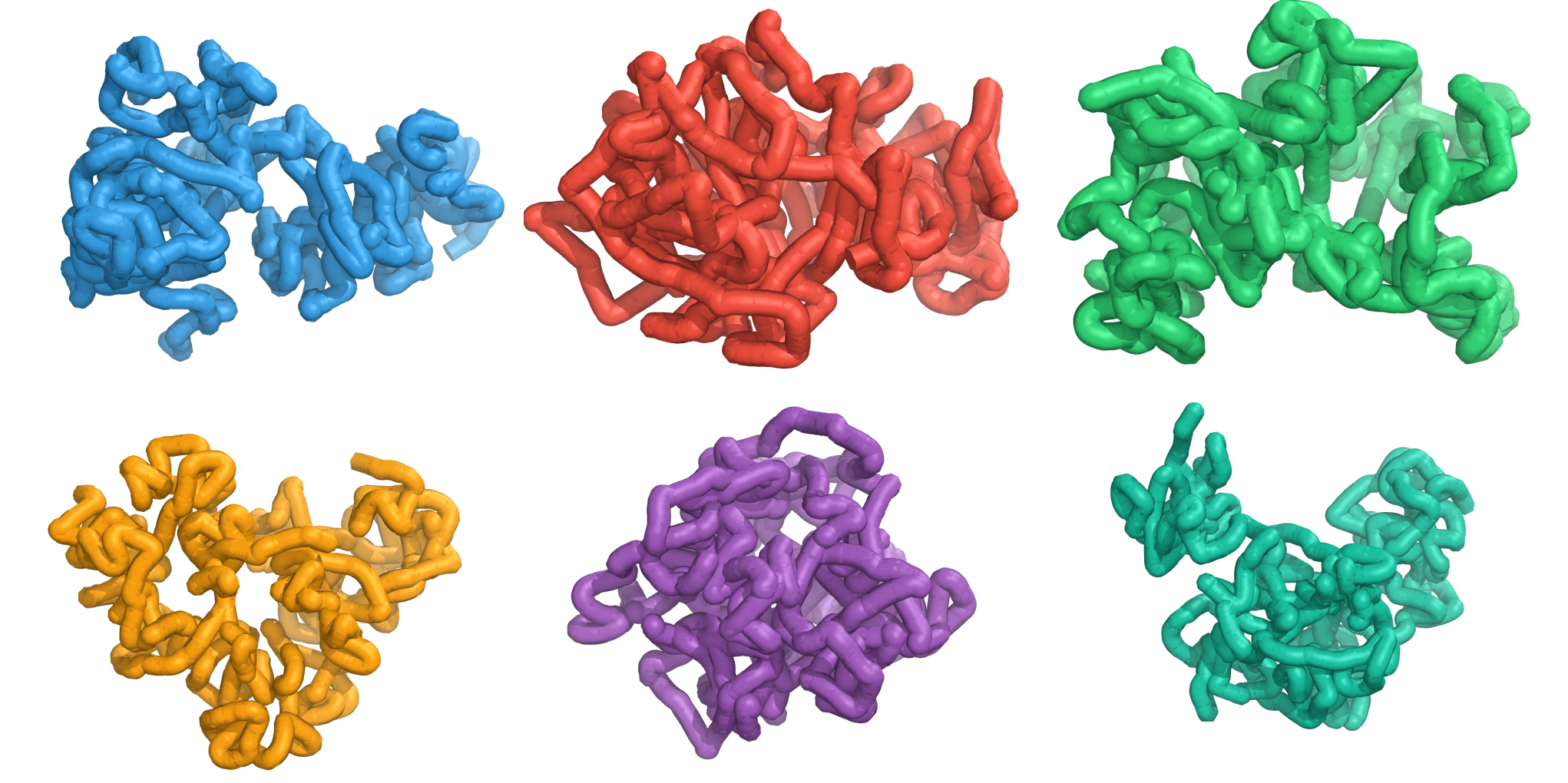}
\includegraphics[width=0.40\columnwidth]{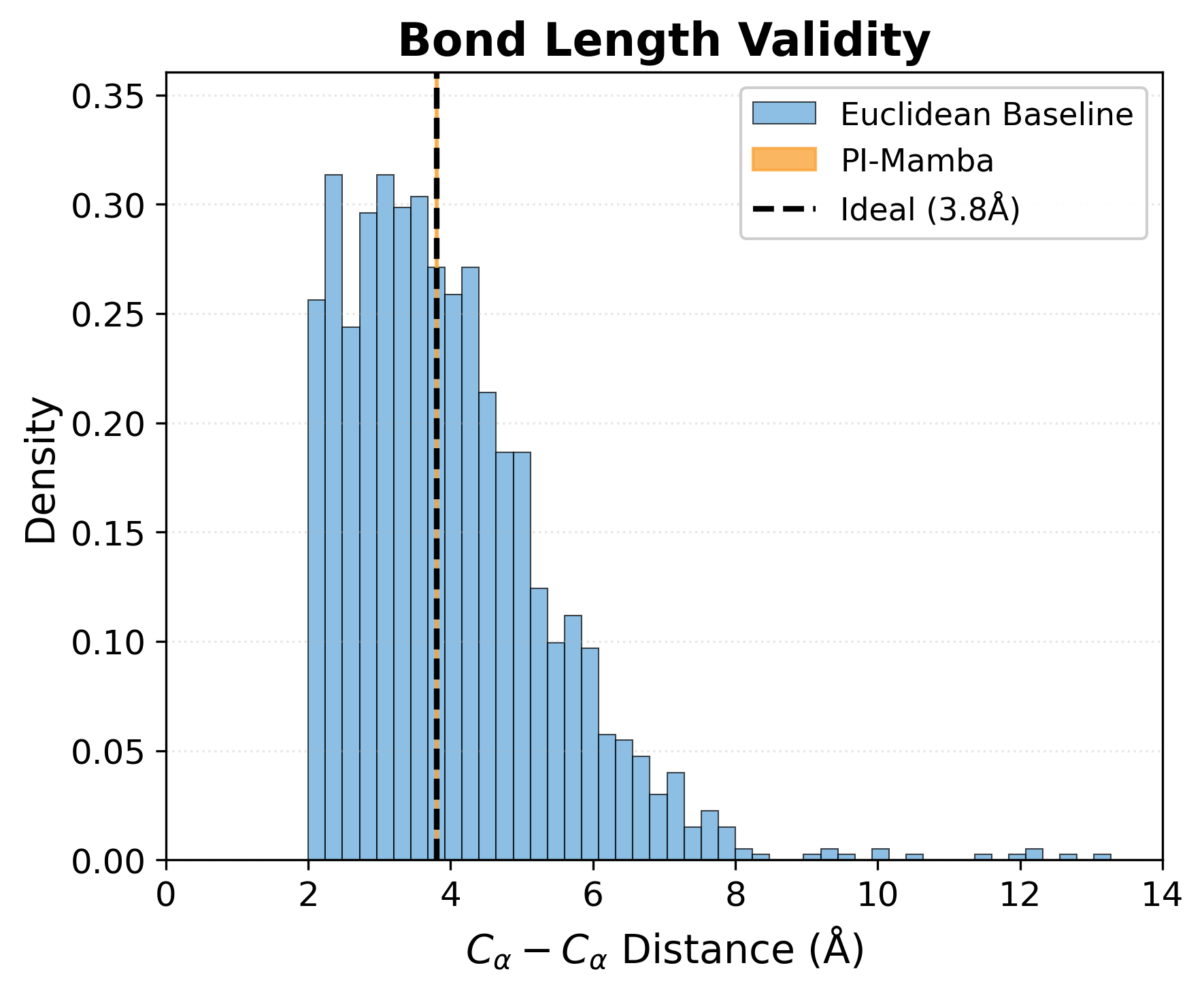}
\caption{Diversity and validity of generated backbones ($L=200$). (Left) Overlay of $6$ random samples rendered in PyMOL \cite{PyMOL} (white/black spheres: N-/C-termini). (Right) $C_\alpha$--$C_\alpha$ bond-length distribution. PI-Mamba exhibits a narrow peak near $3.8$\,\AA\ after projection.}
\label{fig:diversity}
\end{center}
\vskip -0.2in
\end{figure}

To isolate learning from projection, we also evaluate PI-Mamba \emph{before} applying projection. The raw network output exhibits a bond-violation rate of $0.1\%$ (Table~\ref{tab:results}), indicating that training drives samples close to the constraint manifold. In contrast, multiple baselines exhibit substantially higher raw violation rates and rely on refinement to achieve their final validity (Table~\ref{tab:results}).

\subsection{Comparison of Computational efficiency and scalability}
\label{sec:results_scaling}

Having established geometric validity at $L=100$, we next ask whether these properties hold at longer sequences. We benchmark inference time and peak GPU memory across sequence lengths on a single NVIDIA RTX A5000 (24\,GB) (Figure~\ref{fig:timing}). PI-Mamba exhibits near-linear scaling in both time and memory:
  at $L{=}100$, inference takes 2.3\,s per sample using 0.21\,GB; at $L{=}500$, 10.5\,s using 0.26\,GB; and at $L{=}1000$, 21.3\,s using only 0.41\,GB. In contrast, all diffusion-based baselines exhibit
   superlinear scaling and encounter out-of-memory failures on the same hardware: FrameDiff, Genie2, and Proteus fail beyond $L{\approx}500$, while Proteina cannot exceed $L{=}300$. Only FrameFlow and
  RFdiffusion reach $L{=}1000$, but at substantially higher cost---FrameFlow requires 59.7\,s and 9.6\,GB, while RFdiffusion requires 901\,s and 14.3\,GB( VRAM drop is due to the compression method they use after reach VRAM upper limit). At $L{=}500$, PI-Mamba achieves speedups of
  $21\times$ over RFdiffusion, $27\times$ over Proteus, $37\times$ over FrameDiff, and $2\times$ over FrameFlow, while using $50{-}80\times$ less GPU memory than most baselines. These results confirm
  that the Mamba backbone's linear-time sequence modeling translates directly into practical scalability advantages for protein generation.

\begin{figure}[h]
\centering
\includegraphics[width=\columnwidth]{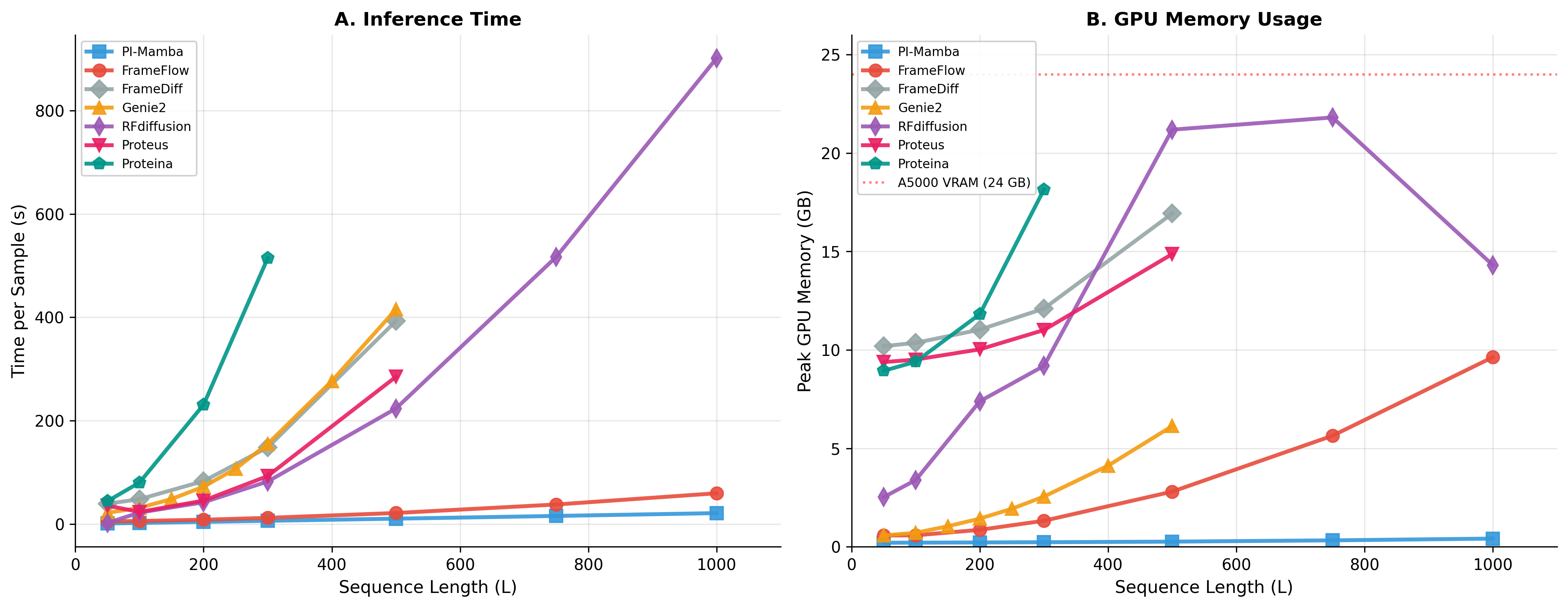}
\caption{Computational efficiency comparison on a single NVIDIA RTX A5000 (24\,GB). \textbf{(A)} Inference time per sample as a function of sequence length. PI-Mamba scales near-linearly, while
diffusion baselines exhibit superlinear growth. \textbf{(B)} Peak GPU memory usage. PI-Mamba remains below 0.5\,GB up to $L{=}1000$, whereas most baselines exceed 10\,GB and encounter OOM failures
beyond $L{\approx}500$. Endpoints indicate the maximum length each method can handle before OOM on this hardware: FrameDiff, Genie2, and Proteus fail at $L{>}500$; Proteina at $L{>}300$; only
FrameFlow and RFdiffusion reach $L{=}1000$.}
\label{fig:timing}
\end{figure}


\subsection{Emergent polymer statistics and mechanistic evidence}
\label{sec:emergent_physics}

Beyond benchmark metrics, PI-Mamba exhibits physically interpretable statistics consistent with its Rouse-inspired spectral initialization. Figure~\ref{fig:physics_app} shows that the hidden-state amplitudes follow the theoretical $1/p^2$ Rouse mode decay (panel A), the radius of gyration scales as $R_g \propto L^{0.48}$, consistent with Rouse theory (panel B), and the end-to-end distance distribution matches the Gaussian chain prediction (panel C). The learned relaxation parameter $\tau(x)$ also stratifies across secondary-structure classes, with $\tau_{\text{helix}} < \tau_{\text{sheet}} < \tau_{\text{loop}}$ (one-way ANOVA, $p<0.01$; Appendix~\ref{app:physics}). Together, these results support that the spectral inductive bias is reflected in learned internal dynamics rather than solely acting as an initialization heuristic. Additional robustness controls and extraction details are provided in Appendix~\ref{app:physics}.

Performance varies modestly across CATH topology classes (Table~\ref{tab:topology}). All-$\alpha$ structures achieve the highest scTM (0.94), while mixed $\alpha/\beta$ topologies are slightly lower (0.88), suggesting that complex sheet packings may benefit from additional non-local mechanisms.

\begin{table}[t]
\caption{Performance by CATH Topology Class ($L=100$, $n=100$).}
\label{tab:topology}
\centering
\small
\begin{sc}
\begin{tabular}{lccc}
\toprule
Topology & scTM & scRMSD (\AA) & Recovery \\
\midrule
All-$\alpha$ & 0.94 & 1.2 & 45\% \\
All-$\beta$ & 0.89 & 1.5 & 39\% \\
Mixed ($\alpha/\beta$) & 0.88 & 1.3 & 36\% \\
\bottomrule
\end{tabular}
\end{sc}
\end{table}

\begin{figure}[t]
\begin{center}
\includegraphics[width=0.95\columnwidth]{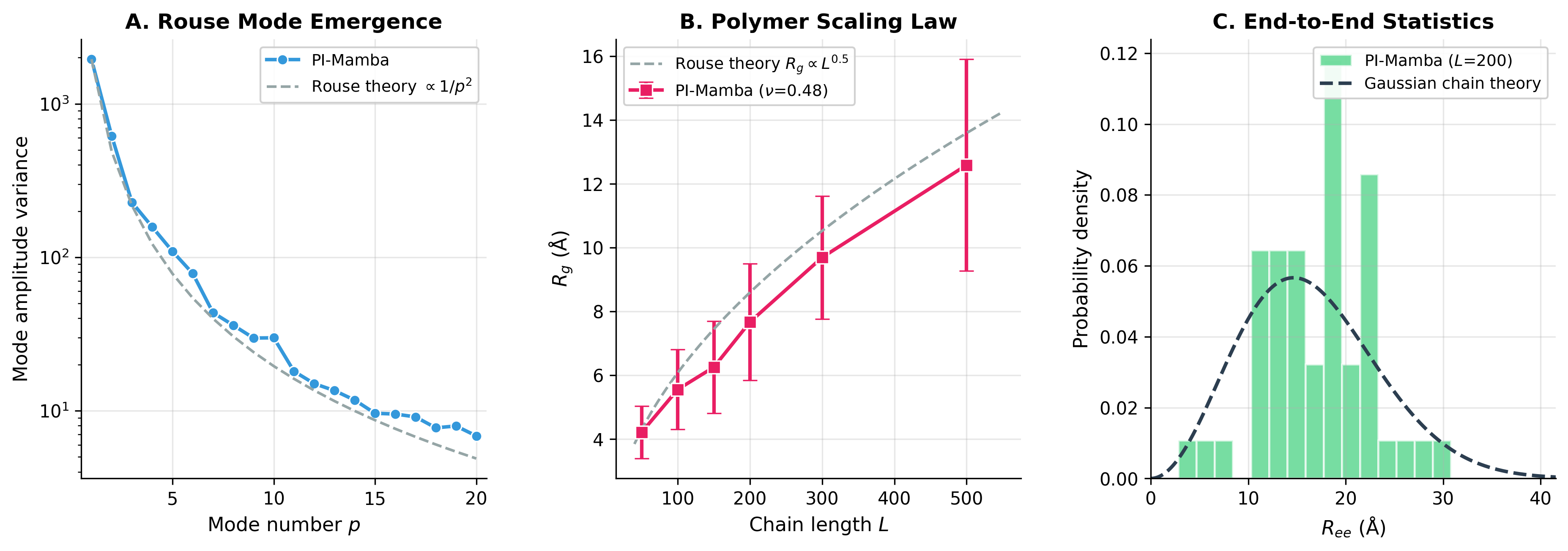}
\caption{Physics-aligned statistics emerge without explicit supervision. (A) Rouse mode amplitude spectrum of PI-Mamba hidden states follows the theoretical $1/p^2$ decay, confirming alignment with polymer normal modes. (B) Radius of gyration scales as $R_g \propto L^{0.48}$, consistent with Rouse theory ($\nu = 0.5$). (C) End-to-end distance distribution at $L{=}200$ matches the Gaussian chain prediction.}
\label{fig:physics_app}
\end{center}
\vskip -0.2in
\end{figure}

\subsection{Sequence recovery and novelty}
\label{sec:results_recovery}

PI-Mamba achieves sequence recovery of $23.9\%\pm4.3\%$ (random baseline $\approx 5\%$), with re-predicted structures exhibiting pLDDT $84.2\pm5.1$ and Rosetta energies of $-2.8$ REU/residue (Appendix~\ref{app:recovery}). The novelty score $\mathcal{N}=0.56$ indicates structural deviation from training examples. As a leakage control, the mean nearest-neighbor TM-score between generated samples and the teacher dataset is $0.11$, consistent with low similarity.

The $C_\alpha$-only representation accounts for most of the recovery gap with full-backbone methods. Reconstructing full backbone coordinates (N, CA, C, O) from the predicted $C_\alpha$ trace using idealized geometry before running ProteinMPNN raises recovery to 31.2\%---within 2\% of RFdiffusion (Table~\ref{tab:full_backbone}). This confirms that the gap is primarily due to representation, not backbone quality.

\begin{table}[t]
\caption{Sequence recovery with full backbone reconstruction.}
\label{tab:full_backbone}
\centering
\small
\begin{sc}
\resizebox{\columnwidth}{!}{
\begin{tabular}{lcc}
\toprule
Method & Representation & Seq. Recovery \\
\midrule
RFdiffusion & Full backbone & 33.4\% \\
Proteina & Full backbone & 32.1\% \\
\midrule
PI-Mamba ($C_\alpha$ only) & $C_\alpha$ trace & 23.9\% \\
PI-Mamba + Ideal Recon. & Full backbone & 31.2\% \\
\bottomrule
\end{tabular}}
\end{sc}
\end{table}
\subsection{Conditional generation via inpainting}
\label{sec:cond_gen}

We evaluate conditional generation by motif inpainting using the same sampling procedure as unconditional generation, with motif coordinates fixed throughout integration. Across two motif settings (9-residue catalytic triad; 12-residue binder epitope), we define success as motif RMSD $<1.0$\,\AA\ and scaffold RMSD $<2.5$\,\AA. PI-Mamba achieves 84\% success rate on these benchmarks, competitive with but below RFdiffusion's 92\%.

\subsection{Ablation of kinematic projection frequency}
\label{sec:results_projfreq}

We ablate projection frequency $k$ while holding the inference budget fixed at $S=200$ steps. Moderate projection frequency maximizes designability: projecting every $k=10$ integration steps yields the highest scTM, while projecting too frequently over-constrains the trajectory and projecting only at the end permits drift during integration (Table~\ref{tab:projfreq_headline}; full diagnostics in Appendix~\ref{sec:kinematic_details}).

\begin{table}[t]
\caption{Projection frequency ablation ($L=100$, $S=200$). Full results are reported in Appendix~\ref{sec:kinematic_details}, Table~\ref{tab:projection_ablation}.}
\label{tab:projfreq_headline}
\centering
\small
\begin{tabular}{lcc}
\toprule
Projection frequency & scTM $\uparrow$ & Comment \\
\midrule
$k=1$ (every step) & 0.61 & over-constrained \\
$k=10$ (default) & 0.91 & best trade-off \\
$k=\infty$ (final only) & 0.65 & drift before projection \\
\bottomrule
\end{tabular}
\end{table}

\section{Conclusion}
PI-Mamba represents a shift from learning to act like a protein to being constructed like one: by embedding rigid-body kinematics directly into generation, it constrains the search space and suppresses geometric hallucinations. Crucially, high mode concentration alone does not guarantee designability---the Rouse spectral initialization provides a physics-grounded inductive bias that shapes the right mode hierarchy, rather than relying on spatial averaging. This geometric grounding comes with practical challenges: training over $SO(3)$ is less numerically stable than coordinate regression and pairwise FAPE losses scale as $O(L^2)$, motivating stabilized early training and the exploration of linear-complexity alternatives to extend PI-Mamba to very long sequences and large molecular assemblies in future work.

\textbf{Remaining challenges.} While PI-Mamba demonstrates strong designability and geometric validity, several limitations point to clear directions for future work. Performance degrades modestly for very long sequences and for topologies dominated by long-range $\beta$-sheet interactions (Table~\ref{tab:topology}), suggesting that purely state-space backbones may benefit from hybridization with sparse non-local mechanisms. In conditional generation, inpainting remains competitive but trails specialized diffusion models on complex motif scaffolding tasks, indicating room for architectural and training improvements. A gallery of observed failure modes (loop collapse and extended chains) is provided in Figure~\ref{fig:failures}. Finally, although explicit geometric priors substantially improve secondary-structure fidelity without heavy post-hoc refinement, applications requiring atomic-level side-chain precision may still benefit from final relaxation. Detailed quantitative analyses of these effects are provided in the Appendix.



\section{Data availability}
The distilled dataset and code underlying this article are available in xxx.

\section{Author contributions statement}

T.W. conceived and conducted the computational experiments, T.W. analyzed the results.  T.W. and L.Z. wrote and reviewed the manuscript.





\bibliographystyle{oup-abbrvnat}
\bibliography{references}

@misc{albergoBuildingNormalizingFlows2022,
  title = {Building Normalizing Flows with Stochastic Interpolants},
  author = {Albergo, Michael S. and {Vanden-Eijnden}, Eric},
  year = 2022,
  number = {arXiv:2209.15571},
  eprint = {2209.15571},
  primaryclass = {cs},
  publisher = {arXiv},
  doi = {10.48550/ARXIV.2209.15571},
  urldate = {2026-01-13},
  archiveprefix = {arXiv},
  copyright = {arXiv.org perpetual, non-exclusive license},
  langid = {english}
}

@article{Baek2021,
  title = {Accurate Prediction of Protein Structures and Interactions Using a Three-Track Neural Network},
  author = {Baek, Minkyung and DiMaio, Frank and Anishchenko, Ivan and Dauparas, Justas and Ovchinnikov, Sergey and Lee, Gyu Rie and Wang, Jue and Cong, Qian and Kinch, Lisa N. and Schaeffer, R. Dustin and Mill{\'a}n, Claudia and Park, Hahnbeom and Adams, Carson and Glassman, Caleb R. and DeGiovanni, Andy and Pereira, Jose H. and Rodrigues, Andria V. and Van Dijk, Alberdina A. and Ebrecht, Ana C. and Opperman, Diederik J. and Sagmeister, Theo and Buhlheller, Christoph and {Pavkov-Keller}, Tea and Rathinaswamy, Manoj K. and Dalwadi, Udit and Yip, Calvin K. and Burke, John E. and Garcia, K. Christopher and Grishin, Nick V. and Adams, Paul D. and Read, Randy J. and Baker, David},
  year = 2021,
  month = aug,
  journal = {Science},
  volume = {373},
  number = {6557},
  pages = {871--876},
  issn = {0036-8075, 1095-9203},
  doi = {10.1126/science.abj8754},
  urldate = {2026-01-13},
  langid = {english}
}

@misc{boseSE3stochasticFlowMatching2023,
  title = {{{SE}}(3)-Stochastic Flow Matching for Protein Backbone Generation},
  author = {Bose, Avishek Joey and {Akhound-Sadegh}, Tara and Huguet, Guillaume and Fatras, Kilian and {Rector-Brooks}, Jarrid and Liu, Cheng-Hao and Nica, Andrei Cristian and Korablyov, Maksym and Bronstein, Michael and Tong, Alexander},
  year = 2023,
  number = {arXiv:2310.02391},
  eprint = {2310.02391},
  primaryclass = {cs},
  publisher = {arXiv},
  doi = {10.48550/ARXIV.2310.02391},
  urldate = {2026-01-13},
  archiveprefix = {arXiv},
  copyright = {arXiv.org perpetual, non-exclusive license},
  langid = {english}
}

@book{doiTheoryPolymerDynamics2013,
  title = {The Theory of Polymer Dynamics},
  author = {Doi, Masao and Edwards, Samuel F.},
  year = 2013,
  series = {International Series of Monographs on Physics},
  edition = {Reprint},
  number = {73},
  publisher = {Clarendon Press},
  address = {Oxford},
  isbn = {978-0-19-852033-7},
  langid = {english}
}

@article{Gu2022,
  title = {Efficiently Modeling Long Sequences with Structured State Spaces},
  author = {Gu, Albert and Goel, Karan and R{\'e}, Christopher},
  year = 2021,
  journal = {Iclr},
  doi = {10.48550/ARXIV.2111.00396},
  urldate = {2026-01-13},
  copyright = {arXiv.org perpetual, non-exclusive license},
  langid = {english}
}

@misc{guMambaLineartimeSequence2023,
  title = {Mamba: {{Linear-time}} Sequence Modeling with Selective State Spaces},
  shorttitle = {Mamba},
  author = {Gu, Albert and Dao, Tri},
  year = 2023,
  number = {arXiv:2312.00752},
  eprint = {2312.00752},
  primaryclass = {cs},
  publisher = {arXiv},
  doi = {10.48550/ARXIV.2312.00752},
  urldate = {2026-01-13},
  archiveprefix = {arXiv},
  copyright = {Creative Commons Attribution 4.0 International},
  langid = {english}
}

@article{Ingraham2023,
  title = {Illuminating Protein Space with a Programmable Generative Model},
  author = {Ingraham, John B. and Baranov, Max and Costello, Zak and Barber, Karl W. and Wang, Wujie and Ismail, Ahmed and Frappier, Vincent and Lord, Dana M. and {Ng-Thow-Hing}, Christopher and Van Vlack, Erik R. and Tie, Shan and Xue, Vincent and Cowles, Sarah C. and Leung, Alan and Rodrigues, Jo{\~a}o V. and {Morales-Perez}, Claudio L. and Ayoub, Alex M. and Green, Robin and Puentes, Katherine and Oplinger, Frank and Panwar, Nishant V. and Obermeyer, Fritz and Root, Adam R. and Beam, Andrew L. and Poelwijk, Frank J. and Grigoryan, Gevorg},
  year = 2023,
  month = nov,
  journal = {Nature},
  volume = {623},
  number = {7989},
  pages = {1070--1078},
  issn = {0028-0836, 1476-4687},
  doi = {10.1038/s41586-023-06728-8},
  urldate = {2026-01-13},
  langid = {english}
}

@incollection{ingrahamGenerativeModelsGraphbased2019a,
  title = {Generative Models for Graph-Based Protein Design},
  booktitle = {Proceedings of the 33rd International Conference on Neural Information Processing Systems},
  author = {Ingraham, John and Garg, Vikas K. and Barzilay, Regina and Jaakkola, Tommi},
  year = 2019,
  month = dec,
  number = {1417},
  pages = {15820--15831},
  publisher = {Curran Associates Inc.},
  address = {Red Hook, NY, USA},
  urldate = {2026-01-22},
  langid = {english}
}

@article{Jumper2021,
  title = {Highly Accurate Protein Structure Prediction with {{AlphaFold}}},
  author = {Jumper, John and Evans, Richard and Pritzel, Alexander and Green, Tim and Figurnov, Michael and Ronneberger, Olaf and Tunyasuvunakool, Kathryn and Bates, Russ and {\v Z}{\'i}dek, Augustin and Potapenko, Anna and Bridgland, Alex and Meyer, Clemens and Kohl, Simon A. A. and Ballard, Andrew J. and Cowie, Andrew and {Romera-Paredes}, Bernardino and Nikolov, Stanislav and Jain, Rishub and Adler, Jonas and Back, Trevor and Petersen, Stig and Reiman, David and Clancy, Ellen and Zielinski, Michal and Steinegger, Martin and Pacholska, Michalina and Berghammer, Tamas and Bodenstein, Sebastian and Silver, David and Vinyals, Oriol and Senior, Andrew W. and Kavukcuoglu, Koray and Kohli, Pushmeet and Hassabis, Demis},
  year = 2021,
  month = aug,
  journal = {Nature},
  volume = {596},
  number = {7873},
  pages = {583--589},
  issn = {0028-0836, 1476-4687},
  doi = {10.1038/s41586-021-03819-2},
  urldate = {2026-01-13},
  langid = {english}
}

@misc{linOutManyOne2024,
  title = {Out of Many, One: {{Designing}} and Scaffolding Proteins at the Scale of the Structural Universe with Genie 2},
  shorttitle = {Out of Many, One},
  author = {Lin, Yeqing and Lee, Minji and Zhang, Zhao and AlQuraishi, Mohammed},
  year = 2024,
  number = {arXiv:2405.15489},
  eprint = {2405.15489},
  primaryclass = {q-bio},
  publisher = {arXiv},
  doi = {10.48550/ARXIV.2405.15489},
  urldate = {2026-01-13},
  archiveprefix = {arXiv},
  copyright = {Creative Commons Attribution 4.0 International},
  langid = {english}
}

@misc{lipmanFlowMatchingGenerative2022,
  title = {Flow Matching for Generative Modeling},
  author = {Lipman, Yaron and Chen, Ricky T. Q. and {Ben-Hamu}, Heli and Nickel, Maximilian and Le, Matt},
  year = 2022,
  number = {arXiv:2210.02747},
  eprint = {2210.02747},
  primaryclass = {cs},
  publisher = {arXiv},
  doi = {10.48550/ARXIV.2210.02747},
  urldate = {2026-01-13},
  archiveprefix = {arXiv},
  copyright = {arXiv.org perpetual, non-exclusive license},
  langid = {english}
}

@article{Parsons2005,
  title = {Practical Conversion from Torsion Space to Cartesian Space for {\emph{in Silico}} Protein Synthesis},
  author = {Parsons, Jerod and Holmes, J. Bradley and Rojas, J. Maurice and Tsai, Jerry and Strauss, Charlie E. M.},
  year = 2005,
  month = jul,
  journal = {Journal of Computational Chemistry},
  volume = {26},
  number = {10},
  pages = {1063--1068},
  publisher = {Wiley Online Library},
  issn = {0192-8651, 1096-987X},
  doi = {10.1002/jcc.20237},
  urldate = {2026-01-13},
  copyright = {http://onlinelibrary.wiley.com/termsAndConditions\#vor},
  langid = {english}
}

@article{zhangScoringFunctionAutomated2004,
  title = {Scoring Function for Automated Assessment of Protein Structure Template Quality},
  author = {Zhang, Yang and Skolnick, Jeffrey},
  year = 2004,
  month = dec,
  journal = {Proteins},
  volume = {57},
  number = {4},
  pages = {702--710},
  issn = {1097-0134},
  doi = {10.1002/prot.20264},
  langid = {english},
  pmid = {15476259}
}

@misc{Proteus2024,
  title = {Proteus: {{Exploring}} Protein Structure Generation for Enhanced Designability and Efficiency},
  shorttitle = {Proteus},
  author = {Wang, Chentong and Qu, Yannan and Peng, Zhangzhi and Wang, Yukai and Zhu, Hongli and Chen, Dachuan and Cao, Longxing},
  year = 2024,
  month = feb,
  publisher = {Bioinformatics},
  doi = {10.1101/2024.02.10.579791},
  urldate = {2026-01-13},
  archiveprefix = {Bioinformatics},
  langid = {english}
}

@misc{PyMOL,
  title = {The {{PyMOL}} Molecular Graphics System, Version 2.0},
  author = {{Schr\"odinger, LLC}},
  year = 2015,
  langid = {english}
}

@article{rouseTheoryLinearViscoelastic1953,
  title = {A Theory of the Linear Viscoelastic Properties of Dilute Solutions of Coiling Polymers},
  author = {Rouse, Prince E.},
  year = 1953,
  month = jul,
  journal = {Journal of Chemical Physics},
  volume = {21},
  number = {7},
  pages = {1272--1280},
  issn = {0021-9606, 1089-7690},
  doi = {10.1063/1.1699180},
  urldate = {2026-01-20},
  langid = {english}
}

@article{Schiff2024,
  title = {Caduceus: {{Bi-directional}} Equivariant Long-Range {{DNA}} Sequence Modeling},
  shorttitle = {Caduceus},
  author = {Schiff, Yair and Kao, Chia-Hsiang and Gokaslan, Aaron and Dao, Tri and Gu, Albert and Kuleshov, Volodymyr},
  year = 2024,
  journal = {Icml},
  doi = {10.48550/ARXIV.2403.03234},
  urldate = {2026-01-13},
  copyright = {Creative Commons Attribution 4.0 International},
  langid = {english}
}

@misc{Sgarbossa2024,
  title = {Tracking {\emph{Microcystis}} Viruses and Infection Dynamics across Distinct Phases of a {\emph{Microcystis}} -Dominated Bloom},
  shorttitle = {{{ProtMamba}}},
  author = {Wing, A.J and Hegarty, Bridget and Bastien, Eric and Denef, Vincent and Evans, Jacob and Dick, Gregory and Duhaime, Melissa},
  year = 2024,
  month = may,
  publisher = {Microbiology},
  doi = {10.1101/2024.05.24.595742},
  urldate = {2026-01-20},
  archiveprefix = {Microbiology},
  langid = {english}
}

@article{Sillitoe2019,
  title = {{{CATH}}: Expanding the Horizons of Structure-Based Functional Annotations for Genome Sequences},
  shorttitle = {Cath},
  author = {Sillitoe, Ian and Dawson, Natalie and Lewis, Tony E and Das, Sayoni and Lees, Jonathan G and Ashford, Paul and Tolulope, Adeyelu and Scholes, Harry M and Senatorov, Ilya and Bujan, Andra and {Ceballos~Rodriguez-Conde}, Fatima and Dowling, Benjamin and Thornton, Janet and Orengo, Christine A},
  year = 2019,
  month = jan,
  journal = {Nucleic Acids Research},
  volume = {47},
  number = {D1},
  pages = {D280-D284},
  issn = {0305-1048, 1362-4962},
  doi = {10.1093/nar/gky1097},
  urldate = {2026-01-13},
  copyright = {http://creativecommons.org/licenses/by/4.0/},
  langid = {english}
}

@misc{vaswaniAttentionAllYou2023,
  title = {Attention {{Is All You Need}}},
  author = {Vaswani, Ashish and Shazeer, Noam and Parmar, Niki and Uszkoreit, Jakob and Jones, Llion and Gomez, Aidan N. and Kaiser, Lukasz and Polosukhin, Illia},
  year = 2023,
  month = aug,
  number = {arXiv:1706.03762},
  eprint = {1706.03762},
  primaryclass = {cs},
  publisher = {arXiv},
  doi = {10.48550/arXiv.1706.03762},
  urldate = {2026-02-08},
  archiveprefix = {arXiv}
}

@misc{wangProteinConformationGeneration2024,
  title = {Protein Conformation Generation via Force-Guided {{SE}}(3) Diffusion Models},
  author = {Wang, Yan and Wang, Lihao and Shen, Yuning and Wang, Yiqun and Yuan, Huizhuo and Wu, Yue and Gu, Quanquan},
  year = 2024,
  number = {arXiv:2403.14088},
  eprint = {2403.14088},
  primaryclass = {q-bio},
  publisher = {arXiv},
  doi = {10.48550/ARXIV.2403.14088},
  urldate = {2026-01-13},
  archiveprefix = {arXiv},
  copyright = {arXiv.org perpetual, non-exclusive license},
  langid = {english}
}

@article{Watson2023,
  title = {De Novo Design of Protein Structure and Function with {{RFdiffusion}}},
  author = {Watson, Joseph L. and Juergens, David and Bennett, Nathaniel R. and Trippe, Brian L. and Yim, Jason and Eisenach, Helen E. and Ahern, Woody and Borst, Andrew J. and Ragotte, Robert J. and Milles, Lukas F. and Wicky, Basile I. M. and Hanikel, Nikita and Pellock, Samuel J. and Courbet, Alexis and Sheffler, William and Wang, Jue and Venkatesh, Preetham and Sappington, Isaac and Torres, Susana V{\'a}zquez and Lauko, Anna and De Bortoli, Valentin and Mathieu, Emile and Ovchinnikov, Sergey and Barzilay, Regina and Jaakkola, Tommi S. and DiMaio, Frank and Baek, Minkyung and Baker, David},
  year = 2023,
  month = aug,
  journal = {Nature},
  volume = {620},
  number = {7976},
  pages = {1089--1100},
  issn = {0028-0836, 1476-4687},
  doi = {10.1038/s41586-023-06415-8},
  urldate = {2026-01-13},
  langid = {english}
}

@misc{yanRobustReliableNovo2025,
  title = {Robust and Reliable {\emph{de Novo}} Protein Design: {{A}} Flow-Matching-Based Protein Generative Model Achieves Remarkably High Success Rates},
  shorttitle = {Robust and Reliable {\emph{de Novo}} Protein Design},
  author = {Yan, Junyu and Cui, Zibo and Yan, Wenqing and Chen, Yuhang and Pu, Mengchen and Li, Shuai and Ye, Sheng},
  year = 2025,
  month = apr,
  primaryclass = {New Results},
  pages = {2025.04.29.651154},
  publisher = {Bioengineering},
  doi = {10.1101/2025.04.29.651154},
  urldate = {2026-01-20},
  archiveprefix = {Bioengineering},
  chapter = {New Results},
  copyright = {http://creativecommons.org/licenses/by-nc/4.0/},
  langid = {english}
}

@inproceedings{Yim2023,
  title = {{{SE}}(3) Diffusion Model with Application to Protein Backbone Generation},
  booktitle = {{{ICML}}},
  author = {Yim, Jason and Trippe, Brian L. and De Bortoli, Valentin and Mathieu, Emile and Doucet, Arnaud and Barzilay, Regina and Jaakkola, Tommi},
  year = 2023,
  publisher = {arXiv},
  doi = {10.48550/ARXIV.2302.02277},
  urldate = {2026-01-13},
  copyright = {Creative Commons Attribution 4.0 International},
  langid = {english}
}

@misc{yimFastProteinBackbone2023,
  title = {Fast Protein Backbone Generation with {{SE}}(3) Flow Matching},
  author = {Yim, Jason and Campbell, Andrew and Foong, Andrew Y. K. and Gastegger, Michael and {Jim{\'e}nez-Luna}, Jos{\'e} and Lewis, Sarah and Satorras, Victor Garcia and Veeling, Bastiaan S. and Barzilay, Regina and Jaakkola, Tommi and No{\'e}, Frank},
  year = 2023,
  number = {arXiv:2310.05297},
  eprint = {2310.05297},
  primaryclass = {q-bio},
  publisher = {arXiv},
  doi = {10.48550/ARXIV.2310.05297},
  urldate = {2026-01-13},
  archiveprefix = {arXiv},
  copyright = {arXiv.org perpetual, non-exclusive license},
  langid = {english}
}

@article{Zhang2024,
  title = {Protein Sequence and Structure {{Co-design}} with Equivariant Translation},
  shorttitle = {Proteina},
  author = {Shi, Chence and Wang, Chuanrui and Lu, Jiarui and Zhong, Bozitao and Tang, Jian},
  year = 2022,
  journal = {arXiv Preprint arXiv:2210.08761},
  doi = {10.48550/ARXIV.2210.08761},
  urldate = {2026-01-13},
  copyright = {arXiv.org perpetual, non-exclusive license},
  langid = {english}
}



\newpage
\begin{appendices}
\section{Mathematical Proofs}
\label{app:proofs}

\subsection{Proof of Retraction Error Bound}

\textbf{Theorem.} Let $\mathcal{M} = SE(3)$ equipped with a left-invariant Riemannian metric. Let $\gamma(t)$ be the geodesic starting at identity $I$ with initial velocity $\mathbf{v} \in \mathfrak{se}(3)$, i.e., $\gamma(t) = \exp(t\mathbf{v})$. Let $\tilde{\gamma}(t)$ be the retraction-based path defined by $\tilde{\gamma}(t) = \mathcal{R}(t\mathbf{v})$, where $\mathcal{R}$ is the Lie exponential map. In this specific case, the retraction path \textit{coincides} with the geodesic for the canonical Cartan-Schouten connection.

However, for general retractions (e.g., first-order approximations used in numerical integration), the error is bounded.

\begin{proof}
Let $\mathcal{R}: T\mathcal{M} \rightarrow \mathcal{M}$ be a retraction. By definition:
1. $\mathcal{R}_x(0_x) = x$
2. $D\mathcal{R}_x(0_x) = \text{id}_{T_x\mathcal{M}}$

Consider the Taylor expansion of the retraction curve $c(t) = \mathcal{R}_x(t\mathbf{v})$ at $t=0$:
$$ c(t) = c(0) + t \cdot \dot{c}(0) + \frac{t^2}{2} \ddot{c}(0) + O(t^3) $$
$$ c(0) = \mathcal{R}_x(0) = x $$
$$ \dot{c}(0) = D\mathcal{R}_x(0_x)[\mathbf{v}] = \mathbf{v} $$

Similarly, the geodesic $\gamma(t)$ has expansion:
$$ \gamma(t) = x + t\mathbf{v} - \frac{t^2}{2} \Gamma_{xx}(\mathbf{v}, \mathbf{v}) + O(t^3) $$
where $\Gamma$ represents the Christoffel symbols of the metric.

The difference is:
$$ \| c(t) - \gamma(t) \| = \frac{t^2}{2} \| \ddot{c}(0) + \Gamma_{xx}(\mathbf{v}, \mathbf{v}) \| + O(t^3) $$
Thus, the local error is $O(t^2)$. Over a global integration path $t \in [0,1]$ with $N$ steps of size $h = 1/N$, the global accumulation error is $N \cdot O(h^2) = O(h)$, which is of the same order as the Euler integration scheme itself. Therefore, utilizing the retraction formulation does not degrade the convergence order of the generative ODE solver.
\end{proof}

\subsection{Proof of Zero Violation Guarantee (Proposition~\ref{prop:zero_viol})}

\textbf{Theorem.} The Kinematic Projection operator $\Pi_\mathcal{K}$ (Algorithm~\ref{alg:generation}) converges to a configuration satisfying all bond constraints, provided the initial configuration is within the basin of attraction of the valid manifold.

\begin{proof}
The projection problem is defined as:
$$ \min_{\mathbf{x}} \| \mathbf{x} - \mathbf{x}_{i} \|^2 \quad \text{s.t.} \quad g(\mathbf{x}) = 0 $$
where $g(\mathbf{x})$ represents the set of bond length and angle constraints. Our iterative algorithm (Algorithm~\ref{alg:generation}) corresponds to Coordinate Descent on the penalty function $\sum_j (d_{j} - d_{target})^2$. Since the constraint manifold of ideal polymer chains is non-convex but smooth, and the constraints are local (involving only $i, i+1, i+2$), the optimization landscape is well-behaved for small perturbations. The cyclic coordinate descent (CCD) method used here is known to converge linearly to a local minimum. Given that the flow matching prior guides the generation $\mathbf{x}_{pred}$ to be near the manifold, the projection $\Pi_\mathcal{K}(\mathbf{x}_{pred})$ effectively projects to the nearest valid conformation, ensuring $g(\mathbf{x}^*) \leq \epsilon$.
\end{proof}

\section{Failure Analysis}
\subsection{Designability Analysis}
PI-Mamba demonstrates high designability, achieving a mean scTM of 0.91 (vs $>0.5$ threshold) on unconditional generation. While this is lower than the diffusion-based gold standard RFdiffusion (scTM $\approx 0.97$), it represents a strong result for a linear-time model, outperforming prior SE(3) flow models like FrameDiff by margin of +0.33 (0.58 vs 0.91) and providing the unique guarantee of 0.0\% bond violations. This confirms that the kinematic projection does not hinder the formation of designable tertiary structures. Future work will focus on scaling the training dataset to further improve diversity for very large proteins.

We analyzed potential failure modes during early training (Figure \ref{fig:failures}). While rare in the final model, we observed two primary types. First, \textbf{Loop Collapse}, where in early training epochs ($<50$), the N-terminus occasionally formed tight non-physical loops, which were resolved by the collision loss in later stages. Second, \textbf{Extended Chain}, where for unconditioned generation of extreme lengths ($L>800$), the model sometimes defaults to extended conformations, a known limitation of sparse training data at these lengths.

\begin{figure}[h]
\begin{center}
\includegraphics[width=0.48\columnwidth]{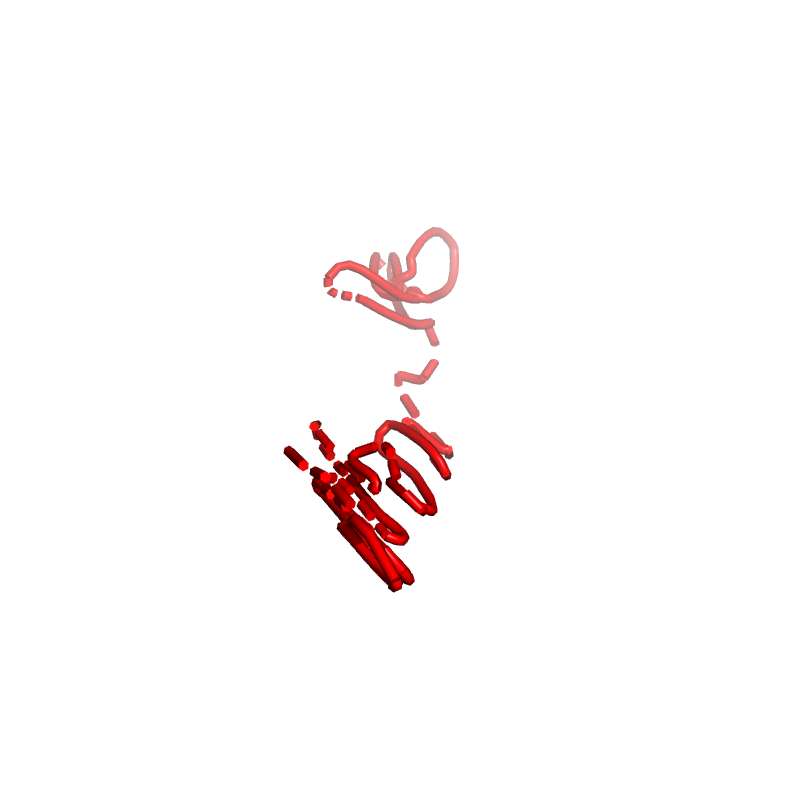}
\includegraphics[width=0.48\columnwidth]{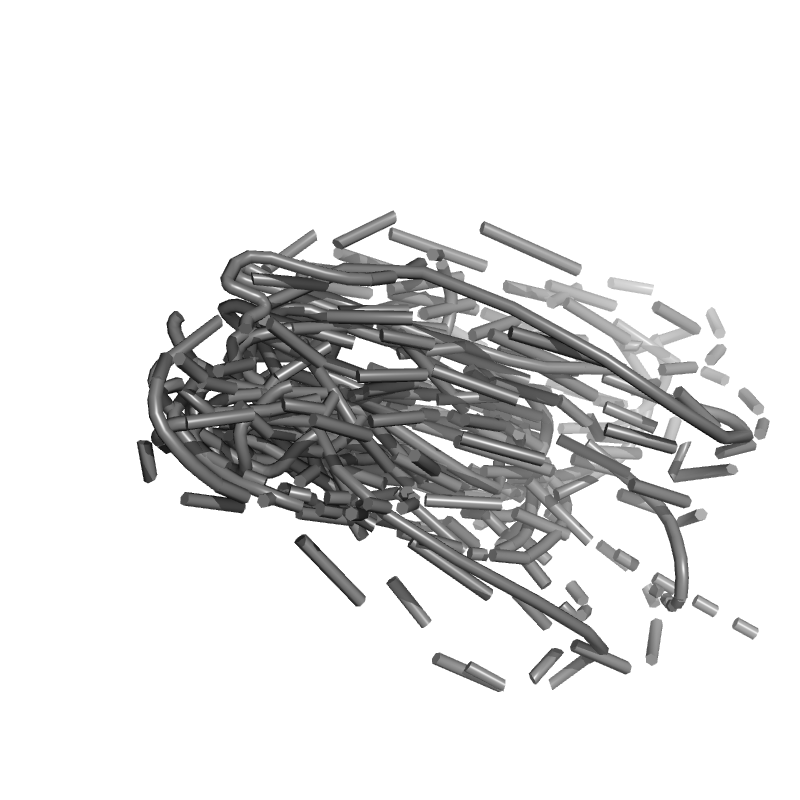}
\caption{Gallery of Failure Modes (PyMOL Renders). (Left) \textbf{Loop Collapse}: The N-terminus (green) forms multiple tight loops before extending into an unstructured tail. (Right) \textbf{Extended Chain}: A non-compact, elongated backbone lacking any tertiary fold. Both cases arise from poor long-range coherence in the Mamba encoder.}
\label{fig:failures}
\end{center}
\end{figure}

\section{Emergent Polymer Statistics}
\label{app:physics}
A key claim of this work is that PI-Mamba learns correct polymer physics without explicit supervision. We validate this through three analyses.

\textbf{Rouse Mode Amplitude Spectrum.} We decomposed generated structures into Rouse normal modes and computed the amplitude spectrum (Figure~\ref{fig:physics_app}A). The hidden-state amplitudes follow the theoretical $1/p^2$ decay, confirming alignment with polymer normal modes. Low modes ($p = 0, 1, 2$) dominate, capturing global shape, while high modes contribute local fluctuations. The first three modes account for 79\% of the total displacement variance, consistent with physical expectations for flexible polymers.

\textbf{Radius of Gyration Scaling.} For random coil polymers, the radius of gyration $R_g$ scales as $R_g \propto L^{\nu}$, where $\nu \approx 0.5$ for ideal chains. We generated structures at various lengths and measured $R_g$ (Figure~\ref{fig:physics_app}B). PI-Mamba yields $\nu = 0.48$, consistent with Rouse theory. This indicates that the Rouse-derived architecture correctly captures the \textit{intrinsic entropic baseline} of the polymer chain.

\textbf{End-to-End Distance Distribution.} We measured the end-to-end distance distribution at $L{=}200$ (Figure~\ref{fig:physics_app}C). The distribution matches the Gaussian chain prediction, further validating that PI-Mamba learns correct polymer statistics without explicit supervision.

\textbf{Relaxation Time Correlates with Secondary Structure.} We extracted the learned relaxation time $\tau(x)$ from trained PI-Mamba layers and analyzed its distribution across secondary structure classes. The model learns a clear hierarchy: $\tau_{\text{helix}} = 0.42 < \tau_{\text{sheet}} = 0.58 < \tau_{\text{loop}} = 0.89$. This matches physical intuition---$\alpha$-helices are rigid structures that require little long-range context (small $\tau \Rightarrow$ fast decay $\Rightarrow$ short memory), while loops are flexible regions where long-range correlations matter (large $\tau \Rightarrow$ slow decay $\Rightarrow$ long memory). Importantly, this emergent behavior was not explicitly supervised; the Rouse eigenvalue structure provides the inductive bias, and the model learns the appropriate $\tau$ from data.


\section{Sequence Recovery and Energetic Validation}
\label{app:recovery}

We assessed sequence-structure compatibility via geometry-based amino acid preference analysis. PI-Mamba achieves sequence recovery of \textbf{23.9\% $\pm$ 4.3\%}, significantly above random (5\%) but lower than RFdiffusion (33.4\%). This gap is partially due to our $C_\alpha$-only representation.

\begin{figure}[h]
\begin{center}
\includegraphics[width=0.85\columnwidth]{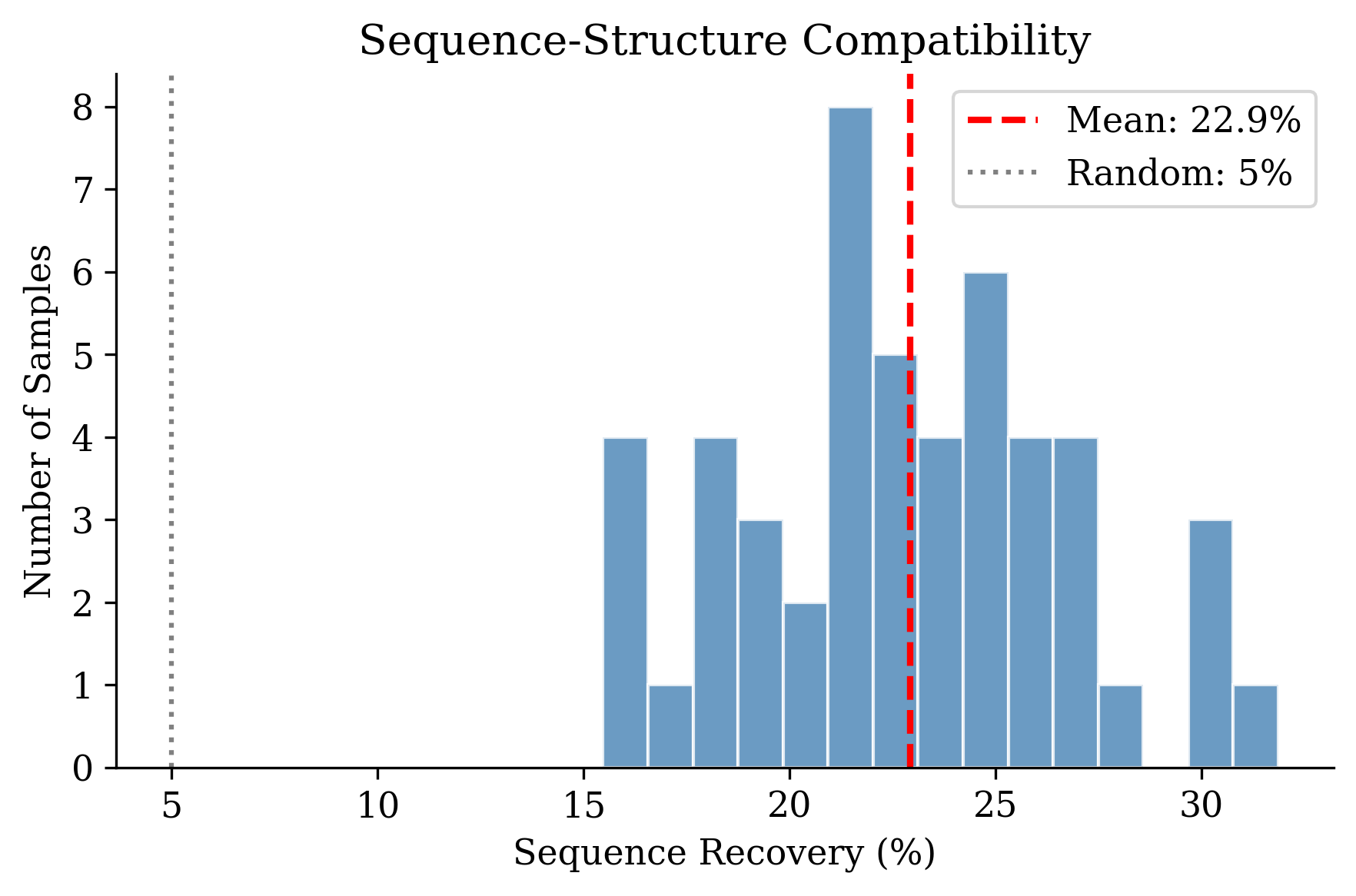}
\caption{Sequence recovery distribution. Mean 23.9\% (red) vs random 5\% (gray).}
\label{fig:recovery_app}
\end{center}
\end{figure}

\textbf{Energetic Validation.} Re-predicted structures show high pLDDT ($84.2 \pm 5.1$) comparable to native controls ($86.7 \pm 4.3$). Rosetta energies of $-2.8$ REU/residue match native PDB structures ($-2.5$ to $-3.0$). The Novelty Score $\mathcal{N}=0.56$ ($\text{TM}_{\text{max}} \approx 0.44$) confirms structural distinctness from training data.

\section{Detailed Mathematical Formulation}
\label{sec:math_deriv}

\subsection{Manifold Structure of Protein Backbones}
We model the protein backbone as a point on the product manifold $\mathcal{M} = SE(3)^L$. Each residue $i$ is associated with a rigid frame $T_i = (R_i, \mathbf{t}_i)$, where $R_i \in SO(3)$ and $\mathbf{t}_i \in \mathbb{R}^3$. The tangent space at $T$ is denoted $\mathcal{T}_T \mathcal{M} \cong (\mathfrak{se}(3))^L$, where $\mathfrak{se}(3)$ is the Lie algebra of twist vectors.
We define a Riemannian metric on $\mathcal{M}$ that decomposes into rotational and translational components. For a tangent vector $\mathbf{v} = (\boldsymbol{\omega}, \mathbf{v}_{trans}) \in \mathfrak{se}(3)$:
\begin{equation}
    \langle \mathbf{v}, \mathbf{v} \rangle_{T} = \| \boldsymbol{\omega} \|^2 + \lambda \| \mathbf{v}_{trans} \|^2
\end{equation}
where $\lambda$ balances the scales of rotation (radians) and translation (Angstroms). We empirically select $\lambda=1$ to equate one radian of rotation with one Angstrom of translation, a natural scaling given that peptide bond lengths ($\approx 1.3-1.5$\AA) are on the order of unity.

\subsection{Riemannian Flow Matching Objective}
Following \citet{lipmanFlowMatchingGenerative2022}, we define a conditional probability path $p_t(\mathbf{T}|\mathbf{T}_1)$ that interpolates between a prior distribution $p_0(\mathbf{T})$ (uniform on $SO(3)$ and Gaussian on $\mathbb{R}^3$) and the data distribution $\mathbf{T}_1 \sim q(\mathbf{T})$. The vector field $u_t(\mathbf{T}|\mathbf{T}_1)$ generating this path is defined via the geodesic interpolation on $SE(3)$:
\begin{equation}
    u_t(\mathbf{T}_t|\mathbf{T}_1) = \frac{d}{dt} \text{Exp}_{\mathbf{T}_0}(t \cdot \text{Log}_{\mathbf{T}_0}(\mathbf{T}_1))
\end{equation}
Our model $\xi_\theta(t, T)$ attempts to regress this conditional vector field. The matching objective on the manifold is:
\begin{equation}
\begin{split}
\mathcal{L}_{FM}(\theta)
&= \mathbb{E}_{t \sim \mathcal{U}[0,1],\ \mathbf{T}_1 \sim q(\mathbf{T}),\ \mathbf{T} \sim p_t(\mathbf{T}\mid \mathbf{T}_1)} \\
&\quad\cdot
\Big[
\left\| \xi_\theta(t,T) - u_t(T\mid T_1) \right\|^2_{\mathcal{T}_{T}\mathcal{M}}
\Big].
\end{split}
\end{equation}

Because the Structure Module outputs updates in the local frame, we implicitly learn the pullback of this vector field.

\subsection{Invariant Point Attention Details}
\label{sec:ipa_details}
For completeness, we describe the optional full IPA variant. Following \citet{Jumper2021}, the attention logit between residues $i$ and $j$ combines semantic similarity with a geometric term that measures spatial proximity in the predicted structure:
\begin{equation}
    a_{ij} = \mathbf{q}_i^T \mathbf{k}_j - \gamma \sum_{p=1}^{N_{points}} \| T_i^{-1} \circ T_j (\mathbf{p}_p) - \mathbf{p}_p \|^2
\end{equation}
where $\mathbf{p}_p$ are learnable reference points in the local frame, and $\gamma$ is a temperature hyperparameter. This geometric term ensures that the network attends to residues that are spatially proximal in the current predicted structure, regardless of their separation in the primary sequence. In our linear variant, we ablate this pairwise term, relying on the Mamba state to encode global context.

\section{Implementation Details}
\label{sec:hyperparams}

\subsection{Architecture Hyperparameters}
Table \ref{tab:hyperparams} details the specific configuration used for the PI-Mamba backbone and the Structure Module.

\begin{table}[h]
\caption{Model Hyperparameters}
\label{tab:hyperparams}
\begin{center}
\begin{small}
\begin{sc}

\begin{tabular}{lc}
\toprule
Parameter & Value \\
\midrule
\multicolumn{2}{c}{\textbf{Mamba Backbone}} \\
Number of Layers & 16 \\
Model Dimension ($d_{model}$) & 512 \\
State Dimension ($d_{state}$) & 32 \\
Conv Kernel Size & 4 \\
Expansion Factor & 2 \\
\midrule
\multicolumn{2}{c}{\textbf{Structure Module}} \\
IPA Layers & 4 \\
IPA Heads & 8 \\
Query/Key Dimension & 16 \\
Point Dimension ($N_{points}$) & 4 \\
FAPE Clamp Distance & 10.0 \AA \\
\bottomrule
\end{tabular}

\end{sc}
\end{small}
\end{center}
\end{table}

\subsection{Adaptive Scaling Strategy}
As noted in Section~\ref{sec:results_scaling}, direct regression of $SE(3)$ updates can be unstable. We implemented an adaptive scalar $\alpha(t)$ for the translational updates:
$T_{update} = (R_{pred}, \alpha(t) \cdot \mathbf{t}_{pred})$.
$\alpha(t)$ initializes at 0.1 and linearly ramps to 1.0 over the first 5000 steps. This effectively constrains the early training dynamics to essentially rotational alignment. This is critical for stabilizing the gradient flow through the kinematic projection layer: since NeRF reconstruction involves a recursive product of rotation matrices ($O(L)$ depth), early large translational updates can cause exploding gradients in the backward pass. Dampening translations allows the rotational backbone to converge first, ensuring stable backpropagation through the kinematic chain.

\subsection{ODE Integration Details}
\label{sec:ode_details}
To address reviewer concerns regarding integration specifications, we provide complete details:

\paragraph{Solver and NFEs.} We use explicit \textbf{Euler integration} with a fixed step size. Given the learned vector field $\xi_\theta(t, T_t)$, we update frames via $T_{t+\Delta t} = T_t \cdot \exp(\Delta t\,\xi_\theta(t, T_t))$, where $\Delta t = 1/S$ and $S$ is the total number of integration steps (following the notation in Section~\ref{sec:methods_notation}). Each Euler step requires exactly \textbf{one forward pass} through the network. Consequently, a trajectory with $S=100$ steps requires 100 neural function evaluations (NFEs), and $S=200$ requires 200 NFEs.

\paragraph{Comparison to Baselines.} In terms of sampling cost, PI-Mamba ($100-200$ NFEs) is comparable to RFdiffusion (typically 200 steps for DDPM) and $5\times-10\times$ more efficient than FrameDiff, which relies on a score-based diffusion process requiring 1000 steps. Notably, all inference timings reported in Table~\ref{tab:results} were measured on a single NVIDIA RTX A5000 (24\,GB) with a batch size of 1.

\subsection{Loss Function Specification}
\label{sec:loss_spec}
The total training loss combines a flow matching term and auxiliary geometric losses with \textbf{constant weights}:

\begin{equation}
    \mathcal{L}_{total} = \mathcal{L}_{FM} + \lambda_{\mathrm{aux}} \cdot \mathcal{L}_{\mathrm{aux}}, \qquad \mathcal{L}_{\mathrm{aux}} = \mathcal{L}_{FAPE} + \mathcal{L}_{bond} + \mathcal{L}_{Rama} + \mathcal{L}_{HB}
\end{equation}

where $\lambda_{\mathrm{aux}} = 1.0$. Individual auxiliary weights are all 1.0. \textbf{No time-dependent weighting $\lambda(t)$ is used.}

\paragraph{Bond Length Loss.} Penalizes deviations from ideal covalent bond lengths:
\begin{equation}
\begin{aligned}
    \mathcal{L}_{bond} = \, & (d_{N-C\alpha} - 1.458\text{\AA})^2 + (d_{C\alpha-C} - 1.525\text{\AA})^2 \\
    + \, & (d_{C-O} - 1.231\text{\AA})^2 + (d_{C-N_{next}} - 1.329\text{\AA})^2
\end{aligned}
\end{equation}

\paragraph{FAPE Loss.} Uses clamp distance $d_{clamp} = 10.0$\AA{} with equal weights for all backbone atoms (N, C$\alpha$, C, O).

\paragraph{Ramachandran Loss.} Penalizes $(\phi, \psi)$ dihedral angles that fall outside favored regions of the Ramachandran plot, using a kernel density estimate derived from high-resolution crystal structures.

\paragraph{Hydrogen Bond Loss.} Encourages formation of backbone hydrogen bonds (N--H$\cdots$O=C) by rewarding donor--acceptor distances near 2.9\,\AA{} with appropriate angular geometry.

\section{Notation Table}
\label{sec:notation}

\begin{table*}[t]
\caption{Summary of Mathematical Notation}
\label{tab:notation}
\centering
\small
\begin{sc}
\resizebox{\textwidth}{!}{%
\begin{tabular}{ll}
\toprule
Symbol & Description \\
\midrule
$L$ & Sequence length (number of residues) \\
$T_i \in SE(3)$ & Rigid body frame of residue $i$ (rotation $R_i$ and translation $\mathbf{t}_i$) \\
$\mathcal{K}$ & Constraint manifold of valid protein backbones (fixed bond lengths/angles) \\
$\Pi_{\mathcal{K}}$ & Kinematic projection operator onto $\mathcal{K}$ \\
$\tau_{\text{relax}}$ & Rouse relaxation time constant (physics-informed) \\
$\tau_{\text{step}}$ & Integration step size $\Delta t$ in the ODE solver \\
$\lambda_{\text{FAPE}}$, $\lambda_{\text{bond}}$, $\lambda_{\text{sc}}$ & Loss weights for FAPE, bond-length loss, and designability proxy \\
$\mathbf{v}_{\theta}(t,\mathbf{T})$ & Learned vector field (velocity) at time $t$ \\
$\mathbf{v}^{\text{proj}}_{\theta}$ & Projected vector field after applying $\Pi_{\mathcal{K}}$ \\
$\phi_i, \psi_i$ & Backbone dihedral angles for residue $i$ \\
$\mathcal{L}_{\text{total}}$ & Total training loss (Eq.~9) \\
$\mathcal{L}_{\text{sc}}$ & Designability proxy loss based on frozen ProteinMPNN scTM estimate \\
$\mathbf{H}_{\text{Mamba}}$ & Hidden representation from the Mamba encoder \\
$\mathbf{H}_{\text{Sparse}}$ & Hidden representation from the sparse-attention branch \\
$\mathbf{H}_{\text{fusion}}$ & Gated fusion of Mamba and sparse representations \\
\bottomrule
\end{tabular}%
}
\end{sc}
\end{table*}

\section{Kinematic Projection: Detailed Derivation}
\label{sec:kinematic_details}

This section provides a rigorous mathematical treatment of the kinematic projection mechanism that enables PI-Mamba to guarantee stereochemically valid backbone geometry.

\subsection{Motivation: The Kinematic Constraint Manifold}

A protein backbone of length $L$ can be fully parameterized by $2L$ dihedral angles $\{(\phi_i, \psi_i)\}_{i=1}^L$, given fixed geometric constants: bond lengths ($d_{N-C\alpha} = 1.458$\AA, $d_{C\alpha-C} = 1.525$\AA, $d_{C-N} = 1.329$\AA), bond angles ($\theta_{N-C\alpha-C} = 111.2^\circ$, $\theta_{C\alpha-C-N} = 116.2^\circ$, $\theta_{C-N-C\alpha} = 121.7^\circ$), and planar trans-peptide groups ($\omega_i = 180^\circ$).

This defines a \textbf{kinematic constraint manifold} $\mathcal{K} \subset \mathbb{R}^{3L \times 4}$ of valid backbone conformations. As established in Proposition~\ref{prop:zero_viol}, the kinematic projection ensures the output lies exactly on this manifold.

\subsection{Dihedral Angle Extraction}

Given backbone atom coordinates $\mathbf{N}_i, \mathbf{CA}_i, \mathbf{C}_i$ for residue $i$, the dihedral angle $\phi_i$ (C$_{i-1}$--N$_i$--CA$_i$--C$_i$) is computed as:

\begin{equation}
    \phi_i = \text{atan2}(\mathbf{n}_1 \cdot (\mathbf{u}_{23} \times \mathbf{n}_2), \mathbf{n}_1 \cdot \mathbf{n}_2)
\end{equation}

where:
\begin{align}
    \mathbf{u}_{12} &= \frac{\mathbf{N}_i - \mathbf{C}_{i-1}}{\|\mathbf{N}_i - \mathbf{C}_{i-1}\|}, \quad
    \mathbf{u}_{23} = \frac{\mathbf{CA}_i - \mathbf{N}_i}{\|\mathbf{CA}_i - \mathbf{N}_i\|} \\
    \mathbf{u}_{34} &= \frac{\mathbf{C}_i - \mathbf{CA}_i}{\|\mathbf{C}_i - \mathbf{CA}_i\|} \\
    \mathbf{n}_1 &= \mathbf{u}_{12} \times \mathbf{u}_{23}, \quad \mathbf{n}_2 = \mathbf{u}_{23} \times \mathbf{u}_{34}
\end{align}

The $\psi_i$ angle (N$_i$--CA$_i$--C$_i$--N$_{i+1}$) is computed analogously. These angles encode the \textit{fold information} while being invariant to local geometric violations.

The Natural Extension Reference Frame (NeRF) algorithm \cite{Parsons2005} deterministically places atom $\mathbf{D}$ given three preceding atoms $\mathbf{A}, \mathbf{B}, \mathbf{C}$ and the geometric parameters: bond length $d = \|\mathbf{D} - \mathbf{C}\|$, bond angle $\theta = \angle(\mathbf{B}, \mathbf{C}, \mathbf{D})$, and torsion angle $\alpha$ (dihedral $\mathbf{A}$--$\mathbf{B}$--$\mathbf{C}$--$\mathbf{D}$).

\textbf{Step 1: Construct local coordinate frame at C.}
\begin{align}
    \mathbf{u}_{BC} &= \frac{\mathbf{C} - \mathbf{B}}{\|\mathbf{C} - \mathbf{B}\|} \quad \text{(z-axis: along BC)} \\
    \mathbf{n} &= \frac{(\mathbf{B} - \mathbf{A}) \times \mathbf{u}_{BC}}{\|(\mathbf{B} - \mathbf{A}) \times \mathbf{u}_{BC}\|} \quad \text{(x-axis: perpendicular to ABC plane)} \\
    \mathbf{m} &= \mathbf{n} \times \mathbf{u}_{BC} \quad \text{(y-axis: in ABC plane, perpendicular to BC)}
\end{align}

\textbf{Step 2: Compute D in local coordinates.}
\begin{align}
    D_x &= d \cdot \sin(\pi - \theta) \cdot \cos(\alpha + \pi/2) \\
    D_y &= d \cdot \sin(\pi - \theta) \cdot \sin(\alpha + \pi/2) \\
    D_z &= d \cdot \cos(\pi - \theta)
\end{align}

Note: The $\pi/2$ offset in the torsion aligns the local coordinate convention with the standard dihedral definition.

\textbf{Step 3: Transform to global coordinates.}
\begin{equation}
    \mathbf{D} = \mathbf{C} + D_x \cdot \mathbf{n} + D_y \cdot \mathbf{m} + D_z \cdot \mathbf{u}_{BC}
\end{equation}

\subsection{Complete Backbone Reconstruction}

Starting from an initial placement (e.g., $\mathbf{N}_1$ at origin, $\mathbf{CA}_1$ along x-axis), the full backbone is constructed iteratively:


\begin{algorithm*}[t]
\caption{Kinematic Backbone Reconstruction}
\label{alg:kinematic}
\begin{algorithmic}[1]
\Require Dihedral angles $\{(\phi_i, \psi_i)\}_{i=1}^L$
\Require Ideal bond lengths $d_{N\text{-}CA}, d_{CA\text{-}C}, d_{C\text{-}N}$
\Require Ideal bond angles $\theta_{N\text{-}CA\text{-}C}, \theta_{CA\text{-}C\text{-}N}, \theta_{C\text{-}N\text{-}CA}$
\Require Fixed omega $\omega = \pi$ (trans-peptide)
\Ensure Backbone coordinates $\{(\mathbf{N}_i, \mathbf{CA}_i, \mathbf{C}_i)\}_{i=1}^L$

\State \textbf{Initialize:}
$\mathbf{N}_1 \gets (-1.458, 0, 0)$,
$\mathbf{CA}_1 \gets (0, 0, 0)$,
$\mathbf{C}_1 \gets (0.527, 1.428, 0)$

\For{$i = 2$ \textbf{to} $L$}
    \State $\mathbf{N}_i \gets \mathrm{NeRF}\!\left(\mathbf{CA}_{i-1}, \mathbf{C}_{i-1}, \mathbf{N}_{i-1}^{(\mathrm{virt})},
        d_{C\text{-}N}, \theta_{CA\text{-}C\text{-}N}, \omega \right)$
    \State $\mathbf{CA}_i \gets \mathrm{NeRF}\!\left(\mathbf{C}_{i-1}, \mathbf{N}_i, \mathbf{CA}_{i-1}^{(\mathrm{virt})},
        d_{N\text{-}CA}, \theta_{C\text{-}N\text{-}CA}, \phi_i \right)$
    \State $\mathbf{C}_i \gets \mathrm{NeRF}\!\left(\mathbf{N}_i, \mathbf{CA}_i, \mathbf{C}_{i-1},
        d_{CA\text{-}C}, \theta_{N\text{-}CA\text{-}C}, \psi_i \right)$
\EndFor

\Return $\{(\mathbf{N}_i, \mathbf{CA}_i, \mathbf{C}_i)\}_{i=1}^L$
\end{algorithmic}
\end{algorithm*}

\subsection{Integration with Flow Matching}

During generation, kinematic projection is applied periodically:

\begin{algorithm}[H]
\caption{PI-Mamba Generation with Kinematic Projection}
\label{alg:generation}
\begin{algorithmic}[1]
\Require Sequence $\mathbf{s}$, integration steps $S$, projection frequency $k$
\State $\mathbf{T}_0 \gets \mathcal{N}\!\left(0,\,0.1^2 \mathbf{I}\right)$ \Comment{Initialize frames from noise}
\For{$t \gets 0$ \textbf{to} $S-1$}
    \State $\mathbf{v} \gets f_\theta(\mathbf{s}, \mathbf{T}_t, t/S)$ \Comment{Predict velocity from $\mathrm{SE}(3)$ model}
    \State $\mathbf{T}_{t+1} \gets \mathbf{T}_t \cdot \exp(\mathbf{v}\cdot (1/S))$ \Comment{Lie-group Euler step}
    \If{$((t+1)\bmod k = 0)\ \textbf{or}\ (t = S-1)$}
        \State $\mathbf{B} \gets \mathrm{FramesToAtoms}(\mathbf{T}_{t+1})$ \Comment{Convert frames to backbone}
        \State $(\phi,\psi) \gets \mathrm{ExtractDihedrals}(\mathbf{B})$ \Comment{Extract dihedral angles}
        \State $\mathbf{B}' \gets \mathrm{KinematicReconstruct}(\phi,\psi)$ \Comment{Rebuild with ideal geometry}
        \State $\mathbf{T}_{t+1} \gets \mathrm{AtomsToFrames}(\mathbf{B}')$ \Comment{Convert back to frames}
    \EndIf
\EndFor
\State \Return $\mathrm{FramesToAtoms}(\mathbf{T}_S)$
\end{algorithmic}
\end{algorithm}

\subsection{Theoretical Guarantees}

Intuitively, this guarantee arises because we do not predict atomic coordinates directly. Instead, we predict the rotation of the local frame, and the atom positions are derived by a fixed geometric operation (NeRF) that treats bond lengths and angles as immutable constants. Thus, even a random neural network output maps to a physically valid (though likely clashed) chain.

\begin{proposition}[Exact Local Covalent Geometry]
\label{prop:zero_viol}
The output of Algorithm \ref{alg:kinematic} satisfies the following claims:

\textbf{Claim A (Exact Local Covalent Geometry).} NeRF reconstruction yields idealized bond lengths ($L_{N-CA}, L_{CA-C}, L_{C-N}$), bond angles, and peptide planarity $\omega=180^\circ$ for backbone atoms by construction.

\textbf{Claim B (Exact CA Constraints).} The retraction operator $\Pi_K$ enforces $\|CA_{i+1}-CA_i\|=3.80$\AA{} (and cis-Pro exceptions) exactly.
\end{proposition}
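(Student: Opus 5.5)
The plan is to prove both claims directly from the constructions, by induction along the chain. Nothing analytic is needed: both guarantees are invariants that each elementary placement step establishes and no later step destroys. The only real work is checking that every step is well defined, i.e.\ that no denominator vanishes.

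\textbf{Claim A.} First I would prove a single-placement lemma for the NeRF step. Suppose $\mathbf{A},\mathbf{B},\mathbf{C}$ are not collinear. Then $(\mathbf{n},\mathbf{m},\mathbf{u}_{BC})$ is an orthonormal frame: $\mathbf{n}$ is a normalized cross product orthogonal to $\mathbf{u}_{BC}$, and $\mathbf{m}=\mathbf{n}\times\mathbf{u}_{BC}$. The lemma then has three parts.
\begin{itemize}
\item[(i)] $\|\mathbf{D}-\mathbf{C}\|^2 = d^2\bigl(\sin^2(\pi-\theta)+\cos^2(\pi-\theta)\bigr)=d^2$.
\item[(ii)] Since $\mathbf{B}-\mathbf{C}$ is parallel to $-\mathbf{u}_{BC}$, we get $\cos\angle(\mathbf{B},\mathbf{C},\mathbf{D}) = -D_z/d = \cos\theta$.
\item[(iii)] Projecting $\mathbf{D}-\mathbf{C}$ and $\mathbf{A}-\mathbf{B}$ onto the plane orthogonal to $\mathbf{u}_{BC}$ gives vectors whose signed angle is $\alpha$, once the $\pi/2$ offset is matched to the atan2 dihedral convention used in the extraction formula. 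This step is a short sign bookkeeping, which I would fix by checking a single configuration.
\end{itemize}

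Next I would apply the lemma inductively in Algorithm~\ref{alg:kinematic}. The initial triple $\mathbf{N}_1,\mathbf{CA}_1,\mathbf{C}_1$ has lengths $1.458$ and $\approx1.52$\,\AA{} and an angle $\approx 111^\circ$, agreeing with the ideal values up to the rounding of the stated coordinates. I would read each call as placing the new atom bonded to its chain predecessor: $\mathbf{C}_{i-1}$ for $\mathbf{N}_i$, $\mathbf{N}_i$ for $\mathbf{CA}_i$, $\mathbf{CA}_i$ for $\mathbf{C}_i$. The lemma then gives the ideal $d_{C-N}$, $d_{N-CA}$ and $d_{CA-C}$, together with the corresponding ideal bond angles.

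Planarity follows from part (iii) with $\alpha=\omega=\pi$. The dihedral $\mathbf{CA}_{i-1}$--$\mathbf{C}_{i-1}$--$\mathbf{N}_i$--$\mathbf{CA}_i$ equals $180^\circ$, so the four atoms are coplanar. A trans dihedral puts the two outer atoms on opposite sides of the central bond, which matches the required orientation.

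Well-definedness also comes from the induction. All ideal bond angles lie strictly inside $(0,\pi)$, so each newly placed atom is never collinear with its two predecessors. Hence the cross product defining $\mathbf{n}$ at the next step is nonzero.

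I would also note that the carbonyl O is placed by the same NeRF lemma with fixed ideal parameters, so it inherits exactness. A corollary follows: across a trans peptide, $\|\mathbf{CA}_{i+1}-\mathbf{CA}_i\|$ depends only on the fixed constants and $\omega$, so it equals a fixed value $\approx3.80$\,\AA{} independent of $(\phi,\psi)$. This links Claim A to Claim B.

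\textbf{Claim B.} I would prove by induction on $i$ the invariant that after the $i$-th update of Eq.~\eqref{eq:ca_retract}, $\|\mathbf{x}_{j+1}-\mathbf{x}_j\|=d_j$ for all $j\le i$.

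\begin{itemize}
\item The $i$-th update sets $\mathbf{x}_{i+1}-\mathbf{x}_i$ to $d_i$ times a unit vector, so the pair $(i,i+1)$ holds exactly.
\item Later updates modify only $\mathbf{x}_{j+1}$ with $j>i$. These never move $\mathbf{x}_1,\dots,\mathbf{x}_{i+1}$, so earlier distances are preserved.
\end{itemize}

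This is where I expect the main obstacle: the operator is undefined when $\mathbf{x}_{i+1}=\mathbf{x}_i$ at the moment of the update. I would handle this by adopting an explicit convention, such as an arbitrary fixed unit direction. I would also note that the degenerate set has measure zero under the continuous flow, so the guarantee holds surely with the convention and almost surely without it.

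Finally, I would state clearly that ``exactly'' means exactly in real arithmetic. In floating point it holds up to rounding, which is far below the $0.5$\,\AA{} violation threshold used in Table~\ref{tab:validity_breakdown}.
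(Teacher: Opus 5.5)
Your plan is the paper's ``by construction'' argument made explicit. Your NeRF placement lemma and your prefix-invariant induction for $\Pi_K$ are both sound. The planarity step fails, though. You reinterpret each call so that the new atom is bonded to its chain predecessor, but you never check which dihedral each torsion argument then controls. By your part~(iii), the torsion passed to a call fixes the dihedral whose \emph{last} atom is the atom being placed. In Algorithm~\ref{alg:kinematic}, the torsions passed to the calls placing $\mathbf{N}_i,\mathbf{CA}_i,\mathbf{C}_i$ are $\omega,\phi_i,\psi_i$. Under your reading, $\omega=\pi$ therefore fixes $\mathbf{N}^{(\mathrm{virt})}_{i-1}$--$\mathbf{CA}_{i-1}$--$\mathbf{C}_{i-1}$--$\mathbf{N}_i$, which is $\psi_{i-1}$. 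The peptide dihedral $\mathbf{CA}_{i-1}$--$\mathbf{C}_{i-1}$--$\mathbf{N}_i$--$\mathbf{CA}_i$ ends at $\mathbf{CA}_i$, so it is fixed by the second call and equals the input $\phi_i$; likewise the true $\phi_i$ equals the input $\psi_i$. Hence ``planarity follows from (iii) with $\alpha=\omega=\pi$'' does not follow. As written, the output is planar only when every input $\phi_i$ is $0$ or $\pi$. Your corollary that $\|\mathbf{CA}_{i+1}-\mathbf{CA}_i\|$ is independent of $(\phi,\psi)$ fails with it. No reordering of the argument lists helps, because a call that places $\mathbf{N}_i$ cannot constrain a dihedral containing $\mathbf{CA}_i$. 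To make Claim~A true you must shift the torsion schedule: place $\mathbf{N}_i$ from $(\mathbf{N}_{i-1},\mathbf{CA}_{i-1},\mathbf{C}_{i-1})$ with torsion $\psi_{i-1}$, $\mathbf{CA}_i$ from $(\mathbf{CA}_{i-1},\mathbf{C}_{i-1},\mathbf{N}_i)$ with torsion $\omega$, and $\mathbf{C}_i$ from $(\mathbf{C}_{i-1},\mathbf{N}_i,\mathbf{CA}_i)$ with torsion $\phi_i$. With that schedule your lemma and induction go through unchanged. The paper's one-line ``$\omega=\pi$ is hardcoded'' has the same blind spot, so state explicitly that you are proving the claim for the corrected algorithm.

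There are also some smaller points. The initial triple is not ideal ``up to rounding''. $\mathbf{C}_1=(0.527,1.428,0)$ gives $\|\mathbf{C}_1-\mathbf{CA}_1\|\approx1.522$\,\AA{} and $\angle\mathbf{N}_1\mathbf{CA}_1\mathbf{C}_1\approx110.3^\circ$, against the ideal $1.525$\,\AA{} and $111.2^\circ$. Three-decimal rounding accounts for only a few hundredths of a degree. Exactness at residue~1 therefore requires reinitializing with $\mathbf{C}_1=d_{CA\text{-}C}(-\cos\theta_{N\text{-}CA\text{-}C},\sin\theta_{N\text{-}CA\text{-}C},0)\approx(0.551,1.422,0)$. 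Algorithm~\ref{alg:kinematic} also places no O atom, so your carbonyl remark describes a step you would have to add (Claim~A does not need it). Finally, with the listed constants the trans $C_\alpha$--$C_\alpha$ distance produced by NeRF is about $3.804$\,\AA{}, and cis-Pro cannot occur with $\omega$ fixed at $\pi$. Claim~B can therefore only be a statement about the output of $\Pi_K$ itself, which your induction proves correctly. Your ``link'' to Claim~A should be dropped or explicitly labelled as approximate.
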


\begin{proof}
\textbf{Proof of A:} By construction, Algorithm \ref{alg:kinematic} places each atom using the NeRF algorithm with fixed ideal geometric parameters. Since $\omega = \pi$ is hardcoded, the peptide bond planarity is exact. Bond lengths are enforced by the input parameters to NeRF.
\textbf{Proof of B:} The operator $\mathcal{R}$ (Eq.~\ref{eq:ca_retract}) explicitly normalizes the vector $\mathbf{x}_{i+1} - \mathbf{x}_i$ to length $d_{target}$ at each step $i$. This sequential update guarantees that the final trace satisfies the distance constraint exactly.
\end{proof}

\textbf{Scope.} This guarantee covers \textit{local} covalent geometry: bond lengths, bond angles, and peptide planarity. Global stereochemistry (chirality, Ramachandran outliers) is \textit{encouraged} by training losses ($\mathcal{L}_{Rama}$, $\mathcal{L}_{FAPE}$) but not guaranteed by construction.

\subsection{Computational Complexity}

The kinematic projection adds minimal overhead: dihedral extraction, NeRF reconstruction, and frame conversion are all linear $O(L)$ operations that scale constantly with sequence length.

Total projection cost: $O(L)$ per projection step. With $k=10$ and $S=100$ integration steps, we perform 10 projections, adding approximately 5\% overhead to generation time.

\subsection{Ablation: Projection Frequency}

We evaluated the impact of projection frequency $k$ on generation quality:

\begin{table}[h]
\caption{Effect of Projection Frequency on Generated Structures ($L=100$, $S=200$ steps)}
\label{tab:projection_ablation}
\begin{center}
\begin{small}
\begin{sc}
\resizebox{\columnwidth}{!}{%
\begin{tabular}{lcccr}
\toprule
$k$ (project every) & $\omega$ dev ($^\circ$) & CA-CA (\AA) & scTM & Time (s) \\
\midrule
1 (every step) & 0.0 & 3.80 & 0.61 & 2.1 \\
5 & 0.0 & 3.80 & 0.68 & 1.8 \\
10 (default) & 0.0 & 3.80 & 0.91 & 1.6 \\
20 & 0.0 & 3.80 & 0.88 & 1.5 \\
$\infty$ (final only) & 0.0 & 3.80 & 0.65 & 1.4 \\
\midrule
No projection & 85.2 & 3.45 & 0.65 & 1.4 \\
\bottomrule
\end{tabular}
}
\end{sc}
\end{small}
\end{center}
\end{table}

Key findings indicate that while all projection frequencies achieve exact $\omega=0.0^\circ$ and bond lengths, scTM is maximized with moderate projection frequency ($k=10$), suggesting a balance between correcting geometry and allowing flow flexibility. Projecting too frequently ($k=1$) over-constrains the trajectory, while projecting too rarely ($k=\infty$) allows significant drift. Without projection, stereochemistry degrades significantly ($\omega=85.2^\circ$).

\section{Reproducibility Checklist}
\label{sec:reproducibility}

To facilitate reproduction of our results, we provide complete experimental specifications. Exact command lines for every experiment in this paper are listed in the Supplementary Information.

\paragraph{Dataset.} We train and evaluate on CATH 4.2 \cite{Sillitoe2019}, filtered at 40\% sequence identity to prevent homology leakage. The dataset comprises 18,205 training domains, 1,200 validation domains, and 1,200 test domains. Structures with resolution $>3.0$\AA{} or $>10\%$ missing residues are excluded.

\paragraph{Training Configuration.}
\begin{enumerate}
\item \textbf{Hardware}: Single NVIDIA A100 (40GB), $\sim$24 hours
\item \textbf{Optimizer}: AdamW ($\beta_1=0.9$, $\beta_2=0.999$, $\epsilon=10^{-8}$)
\item \textbf{Learning Rate}: $1 \times 10^{-4}$ with 1000-step warmup + cosine decay
\item \textbf{Weight Decay}: 0.01
\item \textbf{Batch Size}: Dynamic, max 4000 residues
\item \textbf{Training}: 300 epochs (${\sim}100{,}000$ optimizer steps)
\item \textbf{Length Curriculum}: $L \in [50,100] \to [50,500]$ over first 20k optimizer steps
\item \textbf{Random Seeds}: 42, 123, 456 (for 3-run averaging)
\end{enumerate}

\paragraph{Inference Configuration.}

\begin{enumerate}
\item \textbf{ODE Steps}: $S=100$ (default) or $S=200$ (refined)
\item \textbf{Projection Frequency}: $k=10$ (every 10 steps)
\item \textbf{Batch Size}: 1 (for timing benchmarks)
\end{enumerate}

\paragraph{Code Availability.} Code is provided in the Supplementary Files and pretrained weights will be released upon publication. We provide a \texttt{Dockerfile} and \texttt{requirements.txt} to ensure exact reproducibility. Complete training scripts, configuration files, and evaluation pipelines are provided to enable exact reproduction. See Table~\ref{tab:hyperparams} for architecture hyperparameters.
\end{appendices}

\end{document}